\title{Approaching the Soundness Barrier:\\ A Near Optimal Analysis of the Cube versus Cube Test}
\author{
Dor Minzer\thanks{Department of Mathematics, Massachusetts Institute of Technology, Cambridge, USA. Supported by a Sloan Research
Fellowship.}
\and
Kai Zheng\thanks{Department of Mathematics, Massachusetts Institute of Technology, Cambridge, USA. Supported by the NSF Graduate Research Fellowship DGE-1745302.}
}
\date{\vspace{-5ex}}
\newcommand{\p}{\Pr}
\newcommand{\approxparam}[1]{\,{\stackrel{{#1}}{\approx}}\,}
\DeclareMathOperator{\AG}{AffGras}
\DeclareMathOperator{\G}{Gras}
\DeclareMathOperator{\poly}{poly}
\DeclareMathOperator{\spa}{span}
\DeclareMathOperator{\var}{var}
\newcommand{\E}{\mathop{\mathbb{E}}}
\newcommand{\Ff}{\mathbb{F}}
\newcommand{\D}{\mathcal{D}}
\newcommand{\C}{\mathcal{C}}
\newcommand{\Cxys}{{\mathcal{C}}_{\substack{x,\sigma\\ y, \tau}}}
\newcommand{\Fxys}{{\mathcal{F}}_{\substack{x,\sigma\\ y, \tau}}}
\newcommand{\Fxy}{{\mathcal{F}}_{x,\sigma, y}}
\newcommand{\ind}{\mathbbm{1}}
\newcommand{\eps}{\varepsilon}
\renewcommand{\epsilon}{\eps}
\newcommand{\num}{n}
\newcommand{\Fm}{\mathbb{F}_q^{\num}}
\newcommand\skipi{{\vskip 10pt}}
\renewcommand\leq{\leqslant}
\renewcommand\geq{\geqslant}
\renewcommand\le{\leqslant}
\theoremstyle{plain} 
\newtheorem{theorem}{Theorem}[section]
   \newtheorem{thm}{Theorem}[section]
   \newtheorem{lemma}[thm]{Lemma}
   \newtheorem{definition}{Definition}
\begin{document}
\maketitle
\begin{abstract}
    The Cube versus Cube test is a variant of the well-known Plane versus Plane test of Raz and Safra~\cite{RS}, in which to each $3$-dimensional affine subspace $C$ of $\mathbb{F}_q^n$,
    a polynomial of degree at most $d$, $T(C)$, is assigned in a somewhat locally consistent manner: taking two cubes $C_1, C_2$ that intersect in a plane uniformly
    at random, the probability that $T(C_1)$ and $T(C_2)$ agree on $C_1\cap C_2$ is at least some $\eps$. An element of interest is the soundness threshold of this test, i.e.\
    the smallest value of $\eps$,
    such that this amount of local consistency implies a global structure; namely, that there is a global degree $d$ function $g$ such that $g|_{C} \equiv T(C)$ for at least 
    $\Omega(\eps)$ fraction of the cubes.
    
    We show that the cube versus cube low degree test has soundness $\poly(d)/q$. This result achieves the optimal dependence on $q$ for soundness in low degree testing and improves upon previous soundness results of $\poly(d)/q^{1/2}$ due to Bhangale, Dinur and Navon~\cite{BDN}.
\end{abstract}
\section{Introduction}
\subsection{Low degree testing}
The Reed-Muller code is a basic building block of many results in Theoretical Computer Science. One of the features that makes it so useful, particularly in the area of Probabilistically
Checkable Proofs~\cite{FGLSS,AroraSafra,ALMSS}, is that it admits very efficient local tests. Most relevant to the current paper are the line versus line~\cite{RSu},
plane versus plane~\cite{RS}, and their $3$-dimensional analog, the cube versus cube test~\cite{BDN}. In these settings, we have a finite field $\mathbb{F}_q$ where $q$ is
thought of as large, a degree parameter $d\in\mathbb{N}$ much smaller than $q$, and we wish to encode a degree $d$ polynomial\footnote{Here and throughout, the notion of degree we refer to is the total degree of a polynomial.} $f\colon \mathbb{F}_q^n\to\mathbb{F}_q$ using an encoding scheme that allows for local testing with $2$ queries.

The most basic example of these encoding schemes and tests is given by the line versus line test. To define this, let $\mathcal{L}$ be the set of all lines in $\mathbb{F}_q^n$. A
polynomial $f\colon \mathbb{F}_q^n\to\mathbb{F}_q$ is thus encoded by the restrictions-to-lines table $T$ that assigns to each line $\ell\in \mathcal{L}$ the restriction of $f$ to
$\ell$, i.e.\ $f_{|{\ell}}$. The test that accompanies this encoding scheme is the line versus line test, described as follows:
\begin{enumerate}
    \item Choose a point $x \in \Ff_q^n$ uniformly at random.
    \item Choose two lines $\ell_1, \ell_2$ uniformly and independently conditioned on $\ell_1, \ell_2 \ni x$.
    \item Read $T(\ell_1)$ and $T(\ell_2)$ and check that $T(\ell_1)|_{x} = T(\ell_2)|_{x}$, i.e.\ that these two functions agree on $x$.
\end{enumerate}
It is clear that if $T$ is indeed a table of restrictions of a given polynomial $f$ of degree at most $d$, then the above test passes with probability $1$. The interesting question in
this context is the converse: suppose we have a table of functions $T$ that assigns to each line a function of degree at most $d$, and suppose that the above test passes with probability
at least $s>0$; is it necessarily the case that this table of functions is associated with some global degree $d$ polynomial?

Early works~\cite{FGLSS, AroraSafra, ALMSS} were only able to analyze this test in the case that the soundness parameter $s$ is close to $1$,
namely the case in which we are guaranteed that the table of functions $T$ passes the test with probability close to $1$, say $s = 1-\eps$. In this case, it is typically shown that
any such table is close (in Hamming distance) to a table $T'$ which is an actual table of restrictions of some polynomial $f'\colon\mathbb{F}_q^n\to\mathbb{F}_q$ of degree at most $d$.

To strengthen the PCP theorem however (and more precisely, to improve on the soundness guarantee), it became clear that one has to be able to analyze these tests for as small of
a soundness parameter $s$ as possible. Towards this end, better analysis of the line versus line test was given~\cite{AS} and other variants of the line versus line test were considered.
Most relevant to us are the higher dimensional analogs of the tests, which are the plane versus plane test~\cite{RS} and the cube versus cube test~\cite{BDN}. In these tests, instead
of encoding a polynomial $f\colon\mathbb{F}_q^n\to\mathbb{F}_q$ using its table of restrictions to lines, one encodes $f$ using its table of restrictions to planes (in the case of the
plane versus plane test) and to $3$-dimensional affine subspaces (in the case of the cube versus cube test).

In the cube versus cube test, a polynomial $f\colon\mathbb{F}_q^n\to\mathbb{F}_q$ is encoded using its cubes table $T$, which assigns to each affine cube $C \subseteq \Fm$ a degree at most $d$ polynomial $T(C)$ which is equal to the restriction $f_{|C}$. The associated test with this encoding scheme is the cube versus cube test, defined as:
\begin{enumerate}
    \item Choose an affine plane $P \subseteq \Fm$ uniformly at random.
    \item Choose two affine cubes $C_1, C_2$ uniformly at random such that $C_1, C_2 \supseteq P$.
    \item Read $T(C_1)$ and $T(C_2)$ from the cubes table and accept if and only if the restrictions to $P$ satisfy $T(C_1)|_{P} = T(C_2)|_{P}$.
\end{enumerate}
It is clear that a valid table of restrictions $T$ passes the test with probability $1$, and again the interesting question is the converse. Namely, suppose that a table of functions
$T$ passes the above test with probability $s>0$; is it necessarily the case that $T$ may be associated with some degree $d$ polynomial?

More generally, one can consider tests where affine cubes and planes are replaced with affine subspaces of dimensions $k$ and $\ell$ with $k > \ell$. For such tests, the table $T$ contains supposed restrictions of $f$ to each dimension $k$ affine subspace and performs the following dimension $\ell$ agreement test:
\begin{enumerate}
    \item Choose an affine dimension $\ell$ subspace $U \subseteq \Fm$ uniformly at random.
    \item Choose two affine dimension $k$ subspaces $V_1, V_2$ uniformly at random such that $V_1, V_2 \supseteq U$.
    \item Read $T(V_1)$ and $T(V_2)$ from the table and accept if and only if the restrictions to $U$ satisfy $T(V_1)|_{U} = T(V_2)|_{U}$.
\end{enumerate}

Indeed the cube versus cube test is the $k = 3, \ell = 2$ case, and the originally studied Raz-Safra plane versus plane test is the $k = 2, \ell = 1$ case \cite{RS}. As discussed earlier,
the primary reason these tests were considered is that they admit a very low soundness error. The soundness error of a test is defined to be the smallest $s$ such that if the test passes with probability at least $s$, then there exists a degree $d$ function $g$ such that $g_{V} = T(V)$ for an $\Omega(s)$-fraction of the dimension $k$ affine subspaces $V$.

For the plane vs plane test, Raz and Safra~\cite{RS} showed that $s\geq \frac{n^{C}d^C}{q^c}$ for some absolute constants $c,C>0$, a result that was sufficient in order to prove an improved
PCP characterization of NP (see for example~\cite{DFKRS}). An improved analysis of this test was given by Moshkovitz and Raz~\cite{MR}, who showed that $s\geq \poly(d)/q^{1/8}$. We note
that the natural lower bound on $s$ is $\Theta(1/q)$, since there are tables of assignments $T$ on which the test passes with probability $\Theta(1/q)$ yet all degree $d$ functions
agree with at most $o(1/q)$ of the entries of $T$. Indeed, take $h = c q^3$ for some small absolute constant $c>0$, and pick affine subspaces $W_1,\ldots,W_{h} \subseteq\mathbb{F}_q^n$
of dimension $n-1$ uniformly at random; for each $i=1,\ldots,h$, also pick a degree $d$ polynomial $f_i\colon W_i \to\mathbb{F}_q$. To define the table $T$, for each plane $P$ pick
the first $i$ such that $P\subseteq W_i$ if such $i$ exists and define $T(P) = f_i|_{P}$; otherwise, pick $T(P)$ randomly. It is not hard to see that if we pick planes, $P_1,P_2$, randomly that intersect in a line, and $P_1\subseteq W_i$, then $P_2\subseteq W_i$ with probability $\Omega(1/q)$. In this case, one can show that with constant probability
both $P_1,P_2$ are assigned by $f_i$, and hence in expectation, the test passes with probability $\Omega(1/q)$. It can be shown though, that as each $f_i$ is chosen randomly
and each $W_i$ contains $O(1/q^3)$ of the planes, that no degree $d$ polynomial agrees with $T$ on more than $o(1/q)$ of the planes.

In light of this, an intriguing open question is what the soundness threshold for the plane versus plane and other low degree tests is.
For the Raz-Safra plane versus plane test, the best known soundness analysis is still due to Moshkovitz and Raz in \cite{MR}. Motivated by this question and more recently,
Bhangale, Dinur, and Livni Navon~\cite{BDN} suggested to study the cube versus cube test, and managed to show that its soundness threshold is higher than that
known in the Raz-Safra test. Specifically, they showed that the soundness threshold of the cube versus cube test is $s\geq \poly(d)/q^{1/2}$, and
ask whether it is the case that the true soundness threshold of the cube versus cube test is linear in $1/q$. Our main result confirms that this is indeed the case.

More precisely, the cube versus cube test studied in \cite{BDN} is slightly different than the one presented above. The test they consider is the generalized test above
with $k = 3$ and $\ell = 0$, meaning the tester picks a point $x\in\mathbb{F}_q^{\num}$, and then independently two cubes $C_1,C_2$ that contain $x$, and tests that $T(C_1)$
and $T(C_2)$ assign to $x$ the same value. As shown in \cite{BDN} though, the acceptance probability of the various cube versus cube tests are all virtually the same for any cubes
table $T$, so one can essentially ignore this difference:
\begin{theorem} \label{th: related tests}
For a table $T$ giving restrictions to dimension $k \leq \frac{n}{2}$ subspaces, let $\alpha_{k\ell k}(T)$ denote the probability that $T$ passes the dimension $\ell$ agreement test. Then for $0 \leq r < r' < k$,
\begin{equation*}
     (1 - o(1))\alpha_{krk}(T) \leq \alpha_{kr'k}(T) \leq \alpha_{krk}(T) + (1 + o(1))q^{-(k-2r'+r+1)}.
\end{equation*}
\end{theorem}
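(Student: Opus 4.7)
The plan is to express both acceptance probabilities as weighted averages over the same space of pairs of cubes and then compare the weightings. For each ordered pair $(V_1,V_2)$ of dim-$k$ affine subspaces, let $N_\ell(V_1,V_2)$ count the affine dim-$\ell$ subspaces of $V_1\cap V_2$, and let $A_\ell(V_1,V_2)$ count those on which $T(V_1)$ and $T(V_2)$ agree. A direct double-counting of triples $(U,V_1,V_2)$ gives
\[
\alpha_{k\ell k}(T) \;=\; \frac{\sum_{V_1,V_2} A_\ell(V_1,V_2)}{\sum_{V_1,V_2} N_\ell(V_1,V_2)},
\]
so the two tests differ only in which pairs are eligible (those with $\dim(V_1\cap V_2)\geq\ell$) and in their $N_\ell$-weighting.

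The key structural fact I would establish is that under the $\alpha_{k\ell k}$ sampling the pair $(V_1,V_2)$ has $\dim(V_1\cap V_2)=\ell$ with overwhelming probability; the probability of a strictly larger intersection is controlled by a subspace-counting computation in $\mathbb{F}_q^n/U$, using that two random affine dim-$(k-\ell)$ subspaces of $\mathbb{F}_q^{n-\ell}$ generically miss each other. On such generic pairs $N_\ell=1$ and $A_\ell\in\{0,1\}$, with $A_\ell=1$ exactly when $T(V_1)$ and $T(V_2)$ agree on the full intersection $V_1\cap V_2$; call these the \emph{good} pairs. On non-generic pairs where $T(V_1)$ and $T(V_2)$ disagree on the intersection, Schwartz-Zippel shows that the nonzero difference polynomial of degree at most $d$ vanishes on a random dim-$\ell$ subspace with probability $O(d/q)$, and this contribution is absorbed into the $(1\pm o(1))$ factors.

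For the upper bound, the set of good pairs is intrinsic to $T$ and independent of $\ell$, so the two acceptance probabilities differ essentially only in how the $N_\ell$-weighting distributes mass across the $\dim(V_1\cap V_2)$ strata. Restricting to pairs with $\dim(V_1\cap V_2)\geq r'$, the ratio of the two weightings is a Gaussian-binomial quantity that, combined with the concentration of $\dim(V_1\cap V_2)$ near its minimum, produces exactly $q^{-(k-2r'+r+1)}$ up to lower-order terms; this is the extra additive slack in the upper bound. For the lower bound, I would run the comparison in reverse: every good pair contributing to $\alpha_{krk}$ with $\dim(V_1\cap V_2)\geq r'$ also contributes to $\alpha_{kr'k}$, and the mass of good pairs with $\dim(V_1\cap V_2)=r$ strictly that is lost when passing to the $\alpha_{kr'k}$ weighting is only an $o(1)$ fraction of $\alpha_{krk}$ by the same concentration computation.

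The main obstacle will be the precise Gaussian-binomial bookkeeping to pin down the exponent $k-2r'+r+1$ rather than some approximation of it, together with ensuring the Schwartz-Zippel-type error is bounded uniformly across all possible intersection dimensions $d_0>\ell$; both steps require careful quantitative control of the Gaussian binomials $\binom{n-\ell}{k-\ell}_q$ and $\binom{d_0}{\ell}_q$ as a function of $\ell$.
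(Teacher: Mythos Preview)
First, note that this theorem is not proved in the present paper at all: it is quoted from \cite{BDN}, so there is no ``paper's own proof'' to compare against. I will therefore assess your proposal on its own merits.

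Your double-counting identity
\[
\alpha_{k\ell k}(T)=\frac{\sum_{V_1,V_2} A_\ell(V_1,V_2)}{\sum_{V_1,V_2} N_\ell(V_1,V_2)}
\]
is correct, and so is the observation that under the $N_\ell$-weighting the mass is concentrated on pairs with $\dim(V_1\cap V_2)=\ell$. But this very concentration breaks your lower-bound argument. You write that ``the mass of good pairs with $\dim(V_1\cap V_2)=r$ strictly that is lost when passing to the $\alpha_{kr'k}$ weighting is only an $o(1)$ fraction of $\alpha_{krk}$.'' This is backwards: under the $N_r$-weighting, pairs with $\dim(V_1\cap V_2)=r$ carry a $(1-o(1))$ fraction of the total mass, and hence potentially a $(1-o(1))$ fraction of the good pairs as well. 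All of these pairs have $N_{r'}=0$ and vanish from the $\alpha_{kr'k}$ expression. The two tests sample from essentially disjoint strata of pairs $(V_1,V_2)$, so a direct comparison of ``good pairs'' cannot work. You need a genuinely different mechanism to transfer acceptance from one stratum to another.

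The standard route (and the one in \cite{BDN}) is to write $\alpha_{k\ell k}=\E_U\sum_h p_U(h)^2$ with $p_U(h)=\Pr_{V\supseteq U}[T(V)|_U=h]$, i.e.\ as a collision probability, and then factor the sampling $V\supseteq U_0$ through an intermediate $U'$ with $U_0\subseteq U'\subseteq V$. Jensen's inequality applied to $p_{U_0}(h)=\E_{U'\supseteq U_0}\bigl[\sum_{h':\,h'|_{U_0}=h} p_{U'}(h')\bigr]$ relates the two collision probabilities without ever matching individual pairs across strata; the residual term is the probability that $T(V_1),T(V_2)$ agree on $U_0$ but not on $U'$ when both contain $U'$, and a direct computation on that event yields the stated exponent. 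Your Schwartz--Zippel step is also suspect: the theorem as stated makes no low-degree assumption on the table entries, and its error term does not depend on $d$, so any argument that relies on the difference $T(V_1)-T(V_2)$ being a low-degree polynomial is proving a weaker statement than the one claimed.
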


In this paper, we improve the analysis in \cite{BDN} and show that the cube versus cube test has soundness $\frac{10^7d^6}{q}$. As mentioned, we think of $q$ as large, and $d$ as much smaller than $q$, but still larger than some absolute constant. It suffices to assume that $d < \frac{q^{1/9}}{11}$. Our main theorem is as follows:
\begin{theorem} \label{th: main}
Suppose $T$ is a cubes table such that for each affine cube $C \subseteq \Fm$, $T(C)$ is a degree $d$ polynomial over $C$. If $T$ passes the cube versus cube test with probability $\epsilon \geq \frac{10^7d^6}{q}$, then there exists a degree $d$ polynomial, $g$, such that $g|_{C} = T(C)$ for an $\Omega(\epsilon)$-fraction of all affine cubes $C$.
\end{theorem}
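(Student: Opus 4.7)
The plan is to produce a global degree $d$ polynomial $g$ in three stages: first, define a plurality polynomial $g_P$ on each plane $P$ from the cubes incident to $P$; second, lift this plane-level data to a canonical degree $d$ polynomial $g_W$ on most $4$-dimensional affine subspaces $W$; and third, glue the $g_W$'s into a single global polynomial. This broadly mirrors the strategy of~\cite{RS} and~\cite{BDN}, with the improvement coming from a refined second-moment accounting that exploits the rigidity of degree $d$ polynomials under restriction to planes.

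For the first stage, for each plane $P$ let $g_P$ be the most frequent degree $d$ polynomial in the multiset $\{T(C)|_P : C \supseteq P\}$, and let $w_P$ denote its frequency. Since
\[
    \eps \;=\; \E_P \left[\sum_{p} \Pr_{C \supseteq P}[T(C)|_P = p]^2\right] \;\le\; \E_P[w_P],
\]
one gets $\E_P[w_P] \ge \eps$, and hence on an $\Omega(\eps)$ fraction of planes, $w_P$ itself is $\Omega(\eps)$. Call these the \emph{good} planes; on a good plane $P$, the cubes through $P$ are already highly consistent with the single polynomial $g_P$.

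For the second, and central, stage, I would fix a $4$-dimensional affine subspace $W$ and analyze the cubes and planes inside $W$. The critical rigidity fact is that two distinct degree $d$ polynomials $p_1, p_2$ on a cube $C$ agree on a plane $P \subseteq C$ if and only if $p_1 - p_2$ vanishes on $P$, which happens for only an $O(d/q)$ fraction of planes $P \subseteq C$. Using this, whenever $W$ contains many good planes whose $g_P$'s are mutually compatible on their line intersections, I would argue that the $g_P$'s must arise from the restriction of a single degree $d$ polynomial $g_W\colon W \to \mathbb{F}_q$. A careful second-moment expansion of the pass probability of the test restricted to $W$, combined with this rigidity, should produce such a $g_W$ for a large fraction of $W$'s, with $g_W|_C = T(C)$ for an $\Omega(\eps)$ fraction of cubes $C \subseteq W$. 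I would invoke Theorem~\ref{th: related tests} freely to switch between the point-agreement and plane-agreement formulations as convenient.

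Finally, in the third stage I would globalize: for most intersecting pairs $W_1, W_2$ the polynomials $g_{W_1}$ and $g_{W_2}$ must agree on $W_1 \cap W_2$, since both agree with the plurality $g_P$ on most planes of the intersection. Defining $g(x)$ to be the plurality of $g_W(x)$ over good $W \ni x$ then produces a function $g\colon \mathbb{F}_q^n \to \mathbb{F}_q$; a standard low-degree test on random lines certifies that $g$ is itself a degree $d$ polynomial, and a direct averaging argument shows $g|_C = T(C)$ on an $\Omega(\eps)$ fraction of cubes. The main obstacle is the second stage: naive counting of pairs of planes inside $W$ yields an error of order $1/\sqrt{q}$ and only reproduces the $\poly(d)/q^{1/2}$ bound of~\cite{BDN}. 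Getting the optimal $1/q$ dependence requires treating the pass probability of the test as a quadratic form in the cube table and extracting every bit of rigidity from the degree $d$ structure on each cube, rather than losing a square root to Cauchy-Schwarz at the plane-incidence step.
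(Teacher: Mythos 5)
Your proposal outlines a Raz--Safra/BDN-style plan (plurality on planes, lift to $4$-dimensional subspaces, globalize), which is genuinely different from what the paper does, but you have not produced a proof: you explicitly concede that the naive version of your second stage only recovers $\poly(d)/q^{1/2}$, and the only thing you offer for closing the gap is the sentence that one should ``treat the pass probability as a quadratic form and extract every bit of rigidity.'' That sentence is the entire content of the theorem; it is not a reduction of the problem to a simpler one but a restatement of the goal. So the proposal, as written, has a genuine gap exactly at the step you identify.

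To make the gap concrete: inside a $4$-dimensional $W$ the cubes intersecting in a plane form a graph whose second eigenvalue is $\Theta(1/q)$, so a single Cauchy--Schwarz/expander-mixing step against a set of measure $\Theta(\eps)$ inside $W$ inevitably loses a factor of $\sqrt{1/(q\eps)}$; that is precisely where BDN's $1/\sqrt q$ comes from. You give no mechanism for avoiding this, and no second-moment or concentration estimate that would recoup the lost square root. In contrast, the paper does \emph{not} pass to $4$-dimensional subspaces at all. It works with pairs of points $x,y$, partitions $\mathcal{C}_{x,y}$ into cells $\mathcal{C}_{x,\sigma}\cap\mathcal{C}_{y,\tau}$, and shows $\E_{x,y}\bigl[\sum_{\sigma,\tau}\mu_{x,y}(\mathcal{F}_{x,\sigma}\cap\mathcal{F}_{y,\tau})^2\bigr]\ge\Omega(\eps)$. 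The two ingredients that actually buy the linear-in-$1/q$ bound are (i) the observation that once one restricts to the $\mathcal{F}_{x,\sigma}$'s, for each $\sigma$ there is a \emph{unique} $\tau=g_{x,\sigma}(y)$ that can contribute, so the quadratic form collapses to $O(1/\eps)$ terms and the Cauchy--Schwarz loss disappears; and (ii) the concentration estimate (Lemma~\ref{lm: bump up}) showing that $\mu_{x,y}(\mathcal{F}_{x,\sigma}\cap\mathcal{C}_y)$ rarely exceeds $2\mu_x(\mathcal{F}_{x,\sigma})$, which lets the dyadic decomposition find a scale $t$ at which enough $y$'s have $g_{y,\tau}\equiv g_{x,\sigma}$ to cover $\Omega(\eps)$ of the cubes. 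Neither idea (nor a replacement for either) appears in your outline, and your stages one and three are downstream of a second stage you have not supplied. Your stage-one computation ($\E_P[w_P]\ge\eps$, hence an $\Omega(\eps)$ fraction of planes are good) is correct, but it is shared by the $1/\sqrt q$ argument and does not by itself get you closer to the statement.
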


We note that while Theorem~\ref{th: related tests} implies our soundness bound also holds for the the cube versus cube test with point intersection considered in \cite{BDN}, the same is not true the other way around. Indeed, from Theorem~\ref{th: related tests} one cannot obtain any lower bound on $\alpha_{303}$ from $\alpha_{323}$. Thus, prior to this work even an $O(1/q^2)$ soundness result was not known for the cube versus cube test with dimension $2$ intersection.

We also remark that we made no attempt to optimize the dependence on $d$, and that as noted earlier the main point of this result is that the dependency on the field size $q$ is optimal.

\subsection{Proof Overview}
Our proof strategy relies heavily on ideas from~\cite{BDN}, but to get the optimal soundness threshold our argument uses more refined expansion
as well as sampling arguments.

\paragraph{High level description.}
Suppose that the cube versus cube test passes with probability $\epsilon \geq \frac{10^7d^6}{q}$. For a point $x\in \Fm$ and a value
$\sigma\in \mathbb{F}_q$, define
\[
\C_{x} = \{ C\subseteq \Fm~|~C\text{ is a $3$-dimensional cube, $x\in C$}\},
\qquad
\C_{x,\sigma} = \{ C\subseteq \Fm~|~C\in \C_x\text{ and $T(C)_{|x} = \sigma$}\}.
\]
In words, $\C_{x,\sigma}$ is the set of cubes containing $x$ on which $T$ gives the point $x$ the value $\sigma$. We may view $\C_{x,\sigma}$ as a partition
of $\C_{x}$ according to the value given to $x$. In~\cite{BDN}, the authors show that this partition cannot consist of only parts that are small, otherwise
the contribution to the acceptance probability of the test of these cube would be small. Our argument observes further properties of these partitions. At a high
level, we show that if we take $x$ and $y$ randomly and consider the partition $\C_{x,\sigma}\cap \C_{y,\tau}$ of the cubes containing
both $x$ and $y$, then this partition cannot consist only of small parts. This suggests that the partitions $\C_{x,\sigma}$ and $\C_{y,\tau}$ are correlated.
In fact, in a sense we are able to show that (in the part that contributes to the acceptance probability of the test), they are in $1$ to $1$ correspondence
in the sense that for each $\sigma$ we can identify a single $\tau$ so that $\C_{x,\sigma}\cap \C_y$ and $\C_x\cap \C_{y,\tau}$ are roughly the
same set. We then show that for a sufficiently large fraction of $x$'s and $\sigma$, there is a global function agreeing with $T$ on almost all of $\C_{x,\sigma}$. The above correlation between partitions then allows us to argue that the global functions of $x$ and $y$ agree provided the parts $\C_{x,\sigma}$ and
$\C_{y,\tau}$ are matching according to the above described $1$-to-$1$ correspondence. Using this, we are able to show that for some $x$, the global
function on $\C_{x,\sigma}$ agrees with sufficiently many global functions on $\C_{y,\tau}$ for other $y$'s, and enough so that this already gives agreement
$\Omega(\eps)$ with the table $T$. A more detailed overview follows.

\paragraph{A more detailed description.}
Consider the following way of generating a pair of cubes
intersecting in a plane: sample $x,y\in\Fm$, sample a plane $P$ containing both $x$ and $y$, and sample $C_1, C_2$ cubes containing $P$. We note that the points $x$ and
$y$ partition the set of cubes $C$ containing them into $q^2$ sets:
\[
\C_{\substack{x,\sigma \\ y,\tau}} := \C_{x,\sigma}\cap \C_{y,\tau},
\qquad \text{for all }\sigma,\tau\in\mathbb{F}_q.
\]
We further note that for the test on $C_1, C_2$ to pass, they both must belong to the same part in the partition, i.e. they must both be in $C_{\substack{x,\sigma \\ y,\tau}}$
for some $\sigma,\tau$. Thus, fixing $x$ and $y$, this shows that the probability of the test passing (conditioned on that) is related to the expansion properties of
the partition $\left(\C_{\substack{x,\sigma \\ y,\tau}}\right)_{\sigma,\tau\in\mathbb{F}_q}$ in the affine Grassmann graph. We formally define this graph in the
next section, but informally in our context this graph contains all of the cubes $C$ containing both $x$ and $y$, and two cubes are adjacent if they intersect in a plane.
Thus, we get that the probability that the test passes is at most
\[
\sum\limits_{\sigma,\tau}
\Pr_{C_1\cap C_2 \supseteq P \ni x,y}[C_1, C_2\in \C_{\substack{x,\sigma \\ y,\tau}}]
=
\sum\limits_{\sigma,\tau}
\mu_{x,y}\left(\C_{\substack{x,\sigma \\ y,\tau}}\right)
\left(1-\Phi\left(\C_{\substack{x,\sigma \\ y,\tau}}\right)\right).
\]
Here, $\mu_{x,y}\left(\C_{\substack{x,\sigma \\ y,\tau}}\right)$ is the relative measure of $\C_{\substack{x,\sigma \\ y,\tau}}$ among all cubes containing $x,y$
and $\Phi_{xy}\left(\C_{\substack{x,\sigma \\ y,\tau}}\right)$ is the edge expansion of $\C_{\substack{x,\sigma \\ y,\tau}}$ in the above defined Grassmann graph.
Using the spectral properties of the Grassmann graph, we show that this sum is at most $O\left(\frac{1}{q}\right) + \sum\limits_{\sigma,\tau}
\mu_{x,y}\left(\C_{\substack{x,\sigma \\ y,\tau}}\right)^2$, and thereby conclude (using the fact that the probability the cube versus cube test passes with probability
at least $\eps$) that
\begin{equation}\label{eq:intro}
    \E_{x,y}\left[ \sum_{\sigma, \tau \in \Ff_q} \mu_{x,y}\left( \Cxys \right)^2 \right] \geq \epsilon.
\end{equation}
Intuitively, this says that the partitions given by $(C_{x,\sigma})_{\sigma\in\mathbb{F}_q}$ and $(C_{y,\tau})_{\tau\in\mathbb{F}_q}$ are somewhat correlated
with each other. To make use of this correlation, we must pass first to subsets of $C_{x,\sigma}$ on which we already know that we have some global structure.

For that, we use ideas from \cite{BDN}. Specifically, we identify and consider the pairs $(x,\sigma)$ that contribute almost all of the acceptance probability of the test, and show that for
each such $(x,\sigma)$ one may find a global degree $d$ function $g_{x,\sigma}$ that agrees with $T$ on almost all of $\C_{x,\sigma}$. Thus, letting $\mathcal{F}_{x,\sigma}$ be the set
of cubes containing $x$ on which $T$ and $g_{x,\sigma}$ agree, we get that $\mathcal{F}_{x, \sigma} \subseteq \C_{x, \sigma}$ are very close to each other. This motivates us to
define $\mathcal{F}_{\substack{x, \sigma\\ y, \tau}} = \mathcal{F}_{x, \sigma} \cap \mathcal{F}_{y, \tau}$. Since $\mathcal{F}_{x,\sigma}$ and $\C_{x,\sigma}$ are very close to each other, one expects~\eqref{eq:intro} to imply something
similar about the partial partition $\mathcal{F}_{\substack{x, \sigma\\ y, \tau}}$, and we show that this in indeed the case (though not quite as obviously as one may initially expect). Namely, we
show that~\eqref{eq:intro} implies that
\begin{equation}\label{eq:intro2}
    \E_{x,y}\left[ \sum_{\sigma, \tau \in \Ff_q} \mu_{x,y}\left( \Fxys \right)^2 \right] \geq \Omega(\epsilon).
\end{equation}
We note now that while the sum is over $q^2$ terms, for each $\sigma$ there exists at most a single $\tau$ for which the corresponding summand is non-zero. Indeed,
fixing $\sigma$ means that we look at $\mathcal{F}_{x,\sigma}$ on which we have some global degree $d$ function $g_{x,\sigma}$, and hence the only viable option for $\tau$
is $\tau = g_{x,\sigma}(y)$. Moreover, by averaging considerations it turns out that we can have at most $O(1/\eps)$ many $\sigma$'s for which the pair $(x,\sigma)$
contributes to the above sum. Hence we may find $x$ and $\sigma$ such that
\[
\E_{y}\left[\mu_{x,y}\left( \mathcal{F}_{\left\{\substack{x,\sigma \\ y, g_{x,\sigma}(y)}\right\}} \right)^2\right] \geq \Omega(\eps^2).
\]
Let $\eta = \mu_x(\mathcal{F}_{x,\sigma})$. Sampling $y\in\mathbb{F}_q^{\num}$ uniformly, it is clear that the expected value of
$\mu_{x,y}(\mathcal{F}_{x,\sigma}\cap \mathcal{C}_y)$ is $\eta$, and in fact one can show that a relatively strong concentration
holds. This concentration is strong enough to show that the fraction of $y$'s for which $\mu_{x,y}(\mathcal{F}_{x,\sigma}\cap \mathcal{C}_y)\geq 2\eta$
is very small -- small enough so that discarding them from the above inequality only incurs a small loss. Furthermore, we can also neglect $y$'s such that
$\mu_{x,y}\left( \mathcal{F}_{\left\{\substack{x,\sigma \\ y, g_{x,\sigma}(y)}\right\}} \right)\leq c\eps$ for some sufficiently small absolute constant $c$, so altogether we get that
\[
\E_{y}\left[\mu_{x,y}\left( \mathcal{F}_{\left\{\substack{x,\sigma \\ y, g_{x,\sigma}(y)}\right\}} \right)^2 \ind_{\left\{c\eps\leq \mu_{x,y}\left( \mathcal{F}_{\left\{\substack{x,\sigma \\ y, g_{x,\sigma}(y)}\right\}} \right)\leq 2\eta\right\}}\right] \geq \Omega(\eps^2).
\]
Doing a dyadic partitioning, we get that there is a $t \geq c\eps$  such that
\[
t^2 p_t\geq \Omega\left(\frac{\eps^2}{\log(1/\eps)}\right),
\qquad\text{ where }~~
p_t = \Pr_{y}\left[\mu_{x,y}\left( \mathcal{F}_{\left\{\substack{x,\sigma \\ y, g_{x,\sigma}(y)}\right\}} \right) \in [t,2t)\right].
\]
Taking $Y$ to be the set of $y$'s such that $\mu_{x,y}\left( \mathcal{F}_{\left\{\substack{x,\sigma \\ y, g_{x,\sigma}(y)}\right\}} \right)\in [t,2t)$,
we show $g_{x,\sigma} \equiv g_{y,\tau}$ for all $y \in Y$. Hence, $g_{x,\sigma}$ agrees with the table $T$ on all of the cubes
in $\mathcal{F} = \cup_{y\in Y} \mathcal{F}_{y,\tau}$, and to complete the proof it suffices to show that $\mu(\mathcal{F})\geq \Omega(\eps)$. This is done by
applying a standard spectral argument and using the fact that $\mu(Y) = p_t \geq \Omega\left(\frac{\eps^2}{t^2 \log(1/\eps)}\right)$.
\section{Preliminaries}
\subsection{Notations}
Throughout the paper, we let $q$ denote the field size, $\num$ denote the dimension, and let
$\mathcal{C}$ be the set of all \emph{affine} cubes in $\mathbb{F}_q^{\num}$. An affine cube is a linear $3$-dimensional subspace with all points translated by some $x_0 \in \Ff_q^n$. We henceforth refer to affine cubes as simply cubes.

For points $x,y\in\mathbb{F}_q^{\num}$, we denote
\[
\mathcal{C}_x = \{C \in \mathcal{C} \; | \; x \in C \},
\qquad \qquad
\mathcal{C}_{x,y} = \mathcal{C}_x\cap \mathcal{C}_y.
\]
We also let $\mathcal{L}$ denote the set of all lines in $\Ff_q^n$ and likewise let $\mathcal{L}_x$ denote the set of all lines containing $x$. We let $\mu$ be the uniform distribution over $\mathcal{C}$, $\mu_x$ be the uniform measure over $\mathcal{C}_x$ and $\mu_{x,y}$ be the uniform measure over
$\mathcal{C}_{x,y}$. Abusing notation, we will sometimes use $\mu$ to also denote the uniform measure over $\mathbb{F}_q^{\num}$, and it will be clear from context if we are referring
to the uniform measure over $\mathcal{C}$ or over $\mathbb{F}_q^{\num}$. 

For two functions $f$ and $g$, we say that $f$ is $\delta$-close to $g$ if they  differ on at most a $\delta$ fraction of their inputs. That is, $\Pr_{x}[f(x) \neq g(x)] \leq \delta$, where the space of inputs $x$ is the domain of $f$ and $g$. We will also write $f \approxparam{\delta} g$ to denote that $f$ is $\delta$-close to $g$.

Finally, we will have a cubes table $T$ which assigns to each cube $C\in \mathcal{C}$ a degree $d$ polynomial $T(C)\colon C\to\mathbb{F}_q$. Given such table,
we define
\[
\mathcal{C}_{x,\sigma} = \{C \in \mathcal{C}_x \; | \; T(C)(x) = \sigma \},
\qquad\qquad
\mathcal{C}_{\substack{x,\sigma\\ y,\tau}} = \mathcal{C}_{x,\sigma} \cap \mathcal{C}_{y,\tau},
\]
and we denote for convenience $\eps(T) = \alpha_{323}(T)$, i.e.\ the probability that the cube versus cube test on $T$ passes. 

\skipi
Henceforth, we suppose that $T$ is a cubes table that passes the cube versus cube test with probability $\eps(T) = \epsilon \geq \frac{10^7d^6}{q}$.


\subsection{Grassman and Affine Grassman Graphs}

In this section we introduce the Grassman and Affine Grassman graphs. At times it will be helpful to think about the cube versus cube test in terms of these graphs and use known results about expansion in these graphs.

\begin{definition}
The \emph{Grassman graph} $\G(k, \ell)$ is the graph with vertex set consisting of all $k$-dimensional linear subspaces and edges between all pairs of subspaces $(U_1, U_2)$ satisfying $\dim (U_1 \cap U_2) = \ell$.
\end{definition}
\begin{definition}
The \emph{Affine Grassman graph} $\AG(k, \ell)$, is the graph with vertex set consisting of all $k$-dimensional affine subspaces and edges between all pairs of subspaces $(V_1, V_2)$ satisfying $\dim (V_1 \cap V_2) = \ell$.
\end{definition}

If we let $G = \AG(3, 2)$ we can think of the cube versus cube test as choosing a random edge $(C_1, C_2) \in E$ in the graph and checking if $T(C_1)$ and $T(C_2)$ agree on their intersection. Thus we may let $S$ be the set of edges $(C_1, C_2) \in E$ where $T(C_1)|_{C_1 \cap C_2}=T(C_2)|_{C_1 \cap C_2}$. Since $\epsilon$ is the exact pass probability of the test, we have
\begin{equation*}
    \Pr\left[(C_1, C_2) \in S\right] = \epsilon.
\end{equation*}

If we fix a point $x$ and condition on $C_1, C_2,$ and $C_1 \cap C_2$ containing $x$, then notice that with the remaining degrees of freedom, choosing $C_1$ and $C_2$ corresponds to choosing two linear cubes that intersect in a plane. Equivalently, this is choosing a random edge from the induced subgraph of $G$ on $\mathcal{C}_x$, which we denote by $G_x$. Notice that $G_x$ is isomorphic to $\G(3,2)$. If we further condition on $C_1, C_2,$ and $C_1 \cap C_2$ containing two points $x$ and $y$, then the random choice of $C_1$ and $C_2$ is equivalent to choosing two linear planes that intersect in a line, or a random edge in $G_{xy}$ which is isomorphic to $\G(2,1)$.

If we fix a point $x$ and condition on $C_1 \in \mathcal{C}_{x,\sigma}$, then notice that the test can only pass if $C_2 \in \mathcal{C}_{x,\sigma}$ as well. This is where expansion in the Grassman graphs plays a role. Conditioned on $C_1 \in \mathcal{C}_{x,\sigma}$, and $C_2$ intersecting $C_1$ in a plane containing $x$, the probability that
$C_2 \in \mathcal{C}_{x, \sigma}$ is precisely $1- \Phi_x(\mathcal{C}_{x,\sigma})$, where $\Phi_x$ denotes expansion in $G_x$. Likewise, if we fix two points $x, y$, and condition on $C_1 \in \Cxys$, then we can lower bound the pass probability by $1 - \Phi_{xy} \left( \Cxys \right)$, which is the probability that $C_2 \in \Cxys$. The following fact about expansion in $\G(k, k-1)$ allows us to freely convert these quantities to measures.

\begin{lemma} \label{lm: expansion}
Let $A$ be a set of $k$-dimensional subspaces over $\Fm$ and let $\Phi(A)$ denote the expansion of $A$ in $\G(k, k-1)$. Then
\begin{equation*}
    \mu(A)-\frac{1}{q^{n-k+1} - 2} \leq
    1 - \Phi(A) \leq \mu(A) + \frac{1}{q}.
\end{equation*}
\end{lemma}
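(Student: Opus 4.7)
The plan is to prove the lemma by combining a double-counting identity for edges of $\G(k, k-1)$ with spectral information about the Grassmann graph. Let $M$ denote the number of $k$-dimensional subspaces of $\Fm$ containing any fixed $(k-1)$-dimensional subspace; a standard count gives $M = (q^{n-k+1}-1)/(q-1)$. Every edge $(C_1, C_2)$ of $\G(k, k-1)$ is uniquely identified by its intersection $W := C_1 \cap C_2$, a $(k-1)$-subspace, so sampling a uniform random edge is equivalent to sampling $W$ uniformly over $(k-1)$-subspaces and then sampling an ordered pair of distinct $k$-subspaces containing $W$. Writing $\beta_W = |\{C \in A : W \subseteq C\}|$, this reformulation together with the usual incidence identity between $k$-subspaces, $(k-1)$-subspaces, and containment pairs yields
\[
1 - \Phi(A) = \frac{\E_W[\beta_W(\beta_W - 1)]}{(M-1)\mu(A) M}.
\]

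For the lower bound, I would apply Cauchy--Schwarz in the form $\E_W[\beta_W^2] \geq (\E_W[\beta_W])^2 = (\mu(A) M)^2$, which after simplification gives $1 - \Phi(A) \geq \mu(A) - (1 - \mu(A))/(M-1)$; a short manipulation relating $M - 1$ to $q^{n-k+1} - 2$ then produces the lemma's lower bound.

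For the upper bound, I would invoke the standard spectral fact that the second-largest eigenvalue of the normalized transition operator on $\G(k, k-1)$ is at most $1/q$, a consequence of the explicit eigenvalue formulas for the Grassmann association scheme. Equivalently, one can give a self-contained argument that factors the one-step walk through averaging to and from $(k-1)$-subspaces, exhibiting it as the composition of an averaging operator with its adjoint, hence positive semidefinite and with the desired spectral gap. Expander mixing then yields $\Pr_{\text{edge}}[C_1, C_2 \in A] \leq \mu(A)^2 + (1/q)\mu(A)(1-\mu(A))$, and dividing by $\mu(A)$ gives $1 - \Phi(A) \leq \mu(A) + 1/q$. The main obstacle I expect is carefully tracking the constants in the lower bound so as to reach exactly $1/(q^{n-k+1} - 2)$ from the Cauchy--Schwarz computation, rather than the looser $1/(M-1)$ that falls out immediately.
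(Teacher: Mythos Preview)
Your upper-bound argument is essentially the paper's: both rely on the second-largest eigenvalue of the normalized Grassmann adjacency operator being at most $1/q$, and the expander-mixing step is exactly the one-line computation the paper writes out. (A side remark: the down--up factorization $T=\tfrac{M}{M-1}P-\tfrac{1}{M-1}I$ with $P\succeq 0$ that you mention only shows the spectrum of $T$ lies in $[-\tfrac{1}{M-1},1]$; it does not by itself give the $1/q$ gap, so for the upper bound you do need to cite the explicit eigenvalue formula, as you also propose.)

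The lower bound has a genuine gap. Your Cauchy--Schwarz computation correctly yields
\[
1-\Phi(A)\ \geq\ \mu(A)-\frac{1-\mu(A)}{M-1},
\qquad M-1=\frac{q^{n-k+1}-q}{q-1},
\]
but the ``short manipulation relating $M-1$ to $q^{n-k+1}-2$'' you anticipate does not exist in the direction you need. A one-line check shows that $M-1<q^{n-k+1}-2$ is equivalent to $(q-2)(q^{n-k+1}-1)>0$, so for every $q>2$ one has $\tfrac{1}{M-1}>\tfrac{1}{q^{n-k+1}-2}$ and your inequality is strictly \emph{weaker} than the lemma's stated lower bound (for $q=2$ the two constants coincide). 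In spectral terms, Cauchy--Schwarz is exactly the bound $\lambda_{\min}(T)\geq -\tfrac{1}{M-1}$ coming from the positive-semidefiniteness of the down--up walk $P$; no amount of bookkeeping will squeeze a sharper constant out of that argument.

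The paper's proof instead uses the full eigenvalue list of the Grassmann scheme: the eigenvalues $\lambda_r$ for $1\leq r\leq k-1$ are all nonnegative, while $\lambda_k=-\tfrac{1}{q^{n-k+1}-2}$ exactly. This nonnegativity of the intermediate eigenvalues---which is not visible from the PSD factorization alone---is what lets one replace $-\tfrac{1}{M-1}$ by the sharper $\lambda_k$. Your weaker constant would in fact suffice for every later use of the lemma in this paper (the lower bound is only invoked with the error term tiny, of order $1/q^{n-2}$), but it does not establish the lemma as stated.
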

\begin{proof}
Let $F$ denote the indicator function for $A$. Thus $F$ is a function that takes $k$-dimensional subspaces as input and is $1$ on those in $A$ and $0$ otherwise. Let $T$ denote the adjacency operator of $\G(k, k-1)$. By~\cite[Theorem 9.4.1]{GM}, the eigenvalues of $T$ are
\[
\lambda_r =
\frac{-(q^{k} - 1) + q^r(q^{k-r}-1)(q^{n-k-r+1} - 1)}{(q^{k} - 1)(q^{n-k+1}-2)},
\]
for $r=0,1\ldots,k$. It is easily seen that for $r=0$ we get an eigenvalue of $1$, and for any other $r$ the corresponding eigenvalue is at most $q^{-r}$.
Thus, we may write $F = F_0 + F_1+\ldots+F_k$ where $F_r$ is an eigenvector of $T$ of eigenvalue $\lambda_r$, where $F_0\equiv \mu(S)$, and so
\begin{equation*}
    1-\Phi(A) = \frac{\langle F, TF \rangle}{\mu(S)} = \frac{1}{\mu(A)}\left(\mu(A)^2 + \sum_{r\geq 1} \lambda_r \|F_r\|_2^2 \right)
    \leq \mu(A) + \frac{1}{q}\frac{\sum_{r\geq 1} \|F_r\|_2^2}{\mu(A)}
    \leq \mu(A) + \frac{1}{q},
\end{equation*}
where we use the fact that $\sum_{r\geq 1} \|F_r\|_2^2\leq \sum_{r\geq 0} \|F_r\|_2^2 = \|F\|_2^2 = \mu(A)$. 
For the lower bound, we note that the eigenvalues $\lambda_r$ are nonnegative for $r < k$, while $\lambda_{k} = \frac{-1}{q^{n-k+1} - 2}$.
\end{proof}
\subsection{Bipartite Inclusion Graphs}
Next, we review some results regarding sampling edges in bipartite inclusion graphs that we will use later. These results were used in the context of direct product testing in \cite{IKW} and to analyze the cube versus cube test in \cite{BDN}.

Let $G(A,B)$ denote the bipartite inclusion graph with vertices $A \cup B$ and edges $E$, where $A$ consists of subspaces of some dimension $i$ (possibly $i = 0$ if $A$ is a set of points) in $\Ff_q^m$, $B$ consists of subspaces of some dimension $j > i$ in $\Ff_q^m$, and $(a,b) \in E$ if $a \subseteq b$. Also let $\lambda(G)$ denote the second largest singular value of a graph $G$ and for any vertex $v$ in the graph, let $N(v)$ denote its neighborhood. It is known that such graphs satisfy the following sampling property:
\begin{lemma} \label{lm: edge sampling}
Let $G = G(A,B)$ be a bi-regular bipartite graph. For every subset $B'\subset B$ of measure $\mu>0$ and every $E'\subset E$
\[ \left| \Pr_{\substack{b \in B'\\ a \in N(b)}}[(a,b)\in E'] - \Pr_{\substack{a \in A\\ b \in N(a) \cap B'}}[(a,b)\in E'] \right| \leq \frac{\lambda(G)}{\sqrt{\mu}} .\]
\end{lemma}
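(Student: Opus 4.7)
The plan is to express both sides as averages of a common ``hit count'' function over $a\in A$ but with two different normalizations, and then use the spectral gap of $G$ to bound the variation of the normalization.

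Let $d_A,d_B$ denote the $A$- and $B$-side degrees, and for $a\in A$ set
\[
Q(a) = |N(a)\cap B'|,\qquad P(a) = |\{b\in N(a)\cap B' : (a,b)\in E'\}|,
\]
so $0\le P(a)\le Q(a)$. Biregularity gives $\sum_a Q(a) = |B'|d_B = |A|d_A\mu$, whence
\[
\mathrm{LHS} \;=\; \frac{1}{|A|}\cdot\frac{\sum_a P(a)}{d_A\mu},\qquad \mathrm{RHS} \;=\; \frac{1}{|A|}\sum_{a:\, Q(a)>0}\frac{P(a)}{Q(a)}.
\]
Since $P(a)=0$ whenever $Q(a)=0$, subtracting and pushing $P(a)\le Q(a)$ through the absolute value will yield
\[
|\mathrm{LHS}-\mathrm{RHS}| \;\le\; \E_{a\in A}\!\left[\frac{|Q(a)/d_A - \mu|}{\mu}\right].
\]

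For the spectral step, I view $h := Q/d_A = M\chi_{B'}$, where $M:\mathbb{R}^B\to\mathbb{R}^A$ is the normalized bipartite averaging operator $(Mf)(a) = \frac{1}{d_A}\sum_{b\in N(a)} f(b)$. Decomposing $\chi_{B'} = \mu\mathbf{1} + \chi'$ with $\chi'\perp\mathbf{1}$ and $\|\chi'\|_2^2 = \mu(1-\mu)$, and using that $M$ fixes constants while its remaining singular values are at most $\lambda(G)$, I get $\E[h]=\mu$ and $\var(h) = \|M\chi'\|_2^2 \le \lambda(G)^2\,\mu(1-\mu)$. Cauchy--Schwarz then upgrades this to the $L^1$ bound $\E_a[|h(a)-\mu|]\le \lambda(G)\sqrt{\mu(1-\mu)}$.

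Combining the two bounds,
\[
|\mathrm{LHS}-\mathrm{RHS}| \;\le\; \frac{\lambda(G)\sqrt{\mu(1-\mu)}}{\mu} \;\le\; \frac{\lambda(G)}{\sqrt{\mu}},
\]
as desired. The only mildly delicate point is the $Q(a)=0$ contribution on the right, which is automatic since $P(a)=0$ there; the rest is routine spectral bookkeeping with the biregularity, and I do not anticipate any serious obstacle.
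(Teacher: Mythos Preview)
Your argument is correct. The paper does not actually supply its own proof of this lemma; it simply refers the reader to the appendix of \cite{BDN}. Your proof is the standard spectral argument that underlies that citation: rewrite both probabilities as weighted averages of $P(a)$, reduce the difference to $\E_a[|Q(a)/d_A-\mu|]/\mu$, and bound the latter via the second singular value of the normalized biadjacency operator together with Cauchy--Schwarz. The handling of the $Q(a)=0$ case and the normalization conventions (so that the top singular value is $1$, matching the $\lambda(G)\approx q^{-c}$ values listed in Lemma~\ref{lm: singular value}) are both fine. There is nothing to add; this is essentially the proof the paper is citing.
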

For completeness, we list the singular values of the bipartite inclusion graphs that we will use. 

\begin{lemma} \label{lm: singular value}
For $m\geq 6$, the bipartite inclusion graphs for subspaces of $\Ff_q^m$ have the following second singular values:
\begin{enumerate}
    \item For $G_1 = G(\mathcal{L} \setminus\mathcal{L}_x, \C_x)$, $\lambda(G_1) \approx \frac{1}{\sqrt{q}}$.
    \item For $G_2 = G(\mathcal{L}_x, \C_x)$, $\lambda(G_2) \approx \frac{1}{q}$.
    \item For $G_3 = G(\Ff_q^m \setminus \{x\}, \C_{x})$, $\lambda(G_3) \approx \frac{1}{q}$.
    \item For $G_4 = G(\Ff_q^m \setminus \ell, \C_{\ell})$, $\lambda(G_4) \approx \frac{1}{\sqrt{q}}$.
    \item For $G_5 = G(\Ff_q^m, \C)$, $\lambda(G_5) \approx \frac{1}{q^{3/2}}$.
    \item For $G_6 = G(\Ff_q^3, \mathcal{L})$, $\lambda(G_6) \approx \frac{1}{\sqrt{q}}$.
\end{enumerate}
We use $\approx$ to denote equality up to a multiplicative factor of $1 + o(1)$, where $o(1)$ is a function of $q$ that approaches $0$ as $q \xrightarrow[]{} \infty$.
\end{lemma}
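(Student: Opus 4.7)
The plan is to compute $\lambda(G_i)^2$ as the second largest eigenvalue of the two-step transition operator $T_i$ on $L^2(A_i)$ defined by $(T_i f)(a) = \E_{b \in N(a)} \E_{a' \in N(b)}[f(a')]$, where $A_i$ denotes the lower vertex side of $G_i$. This identity holds for any bi-regular bipartite graph, since the squared singular values of the normalized biadjacency matrix coincide with the eigenvalues of $T_i$. It then remains to analyze the spectrum of $T_i$ case by case, exploiting whatever symmetry remains.

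For the fully homogeneous graphs $G_5$ and $G_6$, the affine group acts transitively on $A_i$ and its point-stabilizer acts transitively on $A_i \setminus \{a\}$. Consequently $T_i = p_i I + (1-p_i) U_i$, where $U_i$ is the uniform-distribution operator on $A_i$ and $p_i$ is the return probability of the two-step walk. On the orthogonal complement of constants, $T_i$ acts as the scalar $p_i - (1-p_i)/(|A_i|-1)$. A direct count gives $p_5 = 1/q^3$ and $p_6 = 1/q$, yielding $\lambda(G_5) \approx q^{-3/2}$ and $\lambda(G_6) \approx q^{-1/2}$ after taking square roots.

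For $G_1, G_2, G_3, G_4$, the fixed datum (a point $x$ or line $\ell$) reduces the symmetry group, but $T_i$ still commutes with the linear stabilizer of that datum after translating so the datum contains the origin. The plan is to decompose $L^2(A_i)$ into the few stabilizer-invariant subspaces, indexed by the position type of a vertex relative to the fixed subspace, and compute the scalar eigenvalue of $T_i$ on each by a short $q$-binomial count. For example for $G_3$, the walk from $y \neq 0$ is an explicit mixture: with probability $\alpha = (q-1)/(q^3-1)$ it lands uniformly on $\spa(y)\setminus\{0\}$, and otherwise uniformly on $\Ff_q^m \setminus \spa(y)$. Decomposing $L^2(\Ff_q^m\setminus\{0\})$ into functions that factor through $\mathbb{P}^{m-1}$ and their orthogonal complement, one observes that the complement is annihilated by $T_3$, while on the line-constant part $T_3$ has non-trivial eigenvalue $\alpha - (1-\alpha)/(|\mathbb{P}^{m-1}|-1) = (1+o(1))/q^2$, whence $\lambda(G_3) \approx 1/q$. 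Cases $G_1, G_2, G_4$ follow the same template, but with the stabilizer-orbit decomposition of lines (rather than points) relative to $x$ or $\ell$.

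The main obstacle is the case-by-case combinatorial bookkeeping for $G_1$--$G_4$: correctly identifying the stabilizer orbits on $A_i$, writing the walk as a mixture of the corresponding averaging operators, and computing the resulting scalars via $q$-binomial identities. A useful consistency check is that whenever two of our graphs partition the edge set of a third (for instance $G_1$ and $G_2$ together partition the full incidence graph $G(\mathcal{L}, \C_x)$), their spectra must recombine to match that of the larger graph, which can be cross-referenced against the Grassmann scheme eigenvalues from \cite[Ch.~9]{GM} already invoked in the proof of Lemma~\ref{lm: expansion}.
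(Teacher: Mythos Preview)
The paper does not actually prove this lemma: immediately after stating it, the authors write ``We refer the reader to the appendix of \cite{BDN} for the proofs of Lemmas~\ref{lm: edge sampling} and~\ref{lm: singular value}.'' So there is no in-paper argument to compare against beyond that citation.

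Your strategy is correct and is the standard way to carry out such computations. The identification of $\lambda(G_i)^2$ with the second eigenvalue of the down--up walk on the smaller side is valid for any bi-regular bipartite graph, and your treatments of $G_5$, $G_6$, and $G_3$ are complete and accurate as written. In particular, the observation that for $G_3$ every linear cube through the origin is a union of punctured lines through the origin, so that $T_3$ annihilates functions summing to zero on each such line, is exactly the right shortcut; the remaining scalar $\alpha = (q-1)/(q^3-1) = 1/(q^2+q+1)$ gives the claimed $1/q$.

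For the remaining cases your template also goes through, and in fact each collapses cleanly once you pick the right quotient. For $G_4$, after translating $\ell$ to a linear line and quotienting by it, the walk becomes the $G_3$-style walk on $\Ff_q^{m-1}\setminus\{0\}$ via linear \emph{planes} (rather than cubes), whose second eigenvalue is $(q-1)/(q^2-1) = 1/(q+1)$; functions non-constant on cosets of $\ell$ are killed for the same ``union of cosets'' reason. For $G_1$, the map $\ell \mapsto \spa(\ell)$ sending an affine line avoiding $0$ to its linear $2$-dimensional span is $(q^2-1)$-to-$1$; $T_1$ annihilates the fibre-nonconstant part, and on the fibre-constant part it is exactly the linear $2$-plane via $3$-space walk, which you can write as $\frac{1}{q^2+q+1}I + \frac{q^2+q}{q^2+q+1}\hat{A}_1$ for the normalized Grassmann adjacency $\hat{A}_1$ on $\G(2,1)$ and read off the second eigenvalue $\approx 1/q$ directly from the eigenvalue formula already quoted in the proof of Lemma~\ref{lm: expansion}. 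Finally $G_2$ is the linear $1$-subspace vs.\ $3$-subspace inclusion graph, whose two-step walk has second eigenvalue $(q-1)/(q^3-1) \approx 1/q^2$ by the same calculation as $G_3$ restricted to line-constant functions. Your proposed consistency check against the Grassmann scheme eigenvalues is therefore not merely a check but the actual computation in the $G_1$ and $G_2$ cases.
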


We refer the reader to the appendix of \cite{BDN} for the proofs of Lemmas~\ref{lm: edge sampling} and \ref{lm: singular value}.

\section{The Main Argument}
\subsection{Local Agreement with Degree d polynomials} \label{sec: local}
In this section we show how to construct local degree $d$ polynomials $g_x$, that agree with an $\Omega(\epsilon)$ fraction of the cubes in $C_x$. The strategy is similar to that of \cite{BDN}. The following notion of excellent $(x, \sigma)$ pairs will be relevant to our analysis. This notion was introduced in an analysis of direct product tests \cite{IKW} and also used in \cite{BDN}.

Recall that $\epsilon \geq \frac{10^7d^6}{q}$ is the exact pass probability of the test. Let $\gamma$  be a small constant factor times $d^{-3}$, say $\gamma = \frac{1}{1000d^3}$.
\begin{definition} \label{def: excellent}
Call a pair $(x, \sigma)$ excellent if the following are true:
\begin{itemize}
    \item $\mu_x(\mathcal{C}_{x,\sigma}) \geq \epsilon / 5$
    \item For $C_1 \in C_{x, \sigma}$ chosen uniformly at random, a random line $\ell \subset C_1$ containing $x$, and $C_2 \in \mathcal{C}_{x, \sigma}$ chosen uniformly conditioned on containing $\ell$, we have that
    \[
    \Pr_{C_1, \ell, C_2}[T(C_1)|_{\ell} \neq T(C_2)|_{\ell}] \leq \gamma.
    \]
\end{itemize}
\end{definition}
Henceforth we let $X$ denote the set of all excellent pairs $(x, \sigma)$. Notice that if we let
\begin{equation*}
    p_{x, \sigma} = \Pr_{\substack{C_1 \in \mathcal{C}_{x,\sigma}, \\ C_2 \cap C_1 = \ell \ni x}}[T(C_1)(x) = T(C_2)(x), \; T(C_1)|_{\ell} \neq T(C_1)|_{\ell}],
\end{equation*}
then the second condition holds if $p_{x, \sigma} \leq \gamma \left(1-\Phi_{x}(\C_{x,\sigma})\right)$, since 
    \[
    \Pr_{C_1, \ell, C_2}[T(C_1)|_{\ell} \neq T(C_2)|_{\ell}] = \frac{p_{x,\sigma}}{1-\Phi_{x}(\C_{x,\sigma})} 
    \]
where the probability is with respect to the distribution in the second point of Definition~\ref{def: excellent}, and the inequality is due to Lemma~\ref{lm: expansion}.

This leads to the following observation:
\begin{lemma} \label{lm: pxsig}
If $(x, \sigma) \notin X$, then either $\mu_x(\mathcal{C}_{x,\sigma}) < \epsilon / 5$ or $p_{x,\sigma}  >  \gamma \left(1-\Phi_{x}(\C_{x,\sigma})\right)$.
\end{lemma}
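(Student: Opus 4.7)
The plan is to treat this lemma as essentially the contrapositive of Definition~\ref{def: excellent}, using the identity
\[
\Pr_{C_1,\ell,C_2}\left[T(C_1)|_\ell \neq T(C_2)|_\ell\right] = \frac{p_{x,\sigma}}{1-\Phi_x(\mathcal{C}_{x,\sigma})}
\]
that is derived in the paragraph preceding the statement. So the first step is to suppose $(x,\sigma)\notin X$; by the conjunctive definition of excellence, at least one of the two bullet points in Definition~\ref{def: excellent} must fail. If the first bullet fails, then immediately $\mu_x(\mathcal{C}_{x,\sigma}) < \epsilon/5$ and the first alternative of the conclusion holds.

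In the remaining case, the second bullet fails, so the probability on the left of the displayed identity exceeds $\gamma$. Substituting into the identity and clearing the denominator $1-\Phi_x(\mathcal{C}_{x,\sigma})$ (which is nonnegative) yields $p_{x,\sigma} > \gamma\bigl(1-\Phi_x(\mathcal{C}_{x,\sigma})\bigr)$, the second alternative of the conclusion. This concludes the argument.

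There is really no obstacle here; the content of the lemma is packaged inside Definition~\ref{def: excellent} together with the identity relating $p_{x,\sigma}$ to a conditional probability. The only nontrivial observation, already recorded before the lemma, is that the distribution in the second bullet of Definition~\ref{def: excellent} is exactly the distribution defining $p_{x,\sigma}$ conditioned on $T(C_1)(x)=T(C_2)(x)$ (equivalently, $C_2\in\mathcal{C}_{x,\sigma}$), and the normalizing conditional probability equals $1-\Phi_x(\mathcal{C}_{x,\sigma})$ by interpreting the ``$C_2\cap C_1 = \ell \ni x$'' step as a random walk in $G_x$ and applying Lemma~\ref{lm: expansion}. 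Thus Lemma~\ref{lm: pxsig} is a purely definitional reformulation that will be convenient in the downstream analysis, where we will want to extract useful structural information from non-excellent pairs.
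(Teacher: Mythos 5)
Your proof is correct and matches the paper's own (implicit) reasoning essentially verbatim: the paper offers no separate proof of Lemma~\ref{lm: pxsig}, instead presenting it as an immediate "observation" following from Definition~\ref{def: excellent} and the identity $\Pr_{C_1,\ell,C_2}[T(C_1)|_\ell \neq T(C_2)|_\ell] = p_{x,\sigma}/(1-\Phi_x(\mathcal{C}_{x,\sigma}))$, which is exactly the contrapositive unpacking you give. The only microscopic quibble is that clearing the denominator requires $1-\Phi_x(\mathcal{C}_{x,\sigma})$ to be strictly positive rather than merely nonnegative; this is harmless since the only case where it could vanish is already covered by the first alternative (small $\mu_x(\mathcal{C}_{x,\sigma})$, via Lemma~\ref{lm: expansion}), but you should say "positive" rather than "nonnegative."
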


The main result of this section shows that for an excellent $(x, \sigma)$, there exists a degree $d$ polynomial $g$ that agrees with almost all of the cubes in $\mathcal{C}_{x, \sigma}$. The same result has been proved in~\cite{BDN} under a stronger assumption on the largeness of $\eps$.
We begin by defining the function $f_{x, \sigma}$ by plurality over all cubes in $\mathcal{C}_{x, \sigma}$.
\begin{definition}
For a pair $(x, \sigma)$, define $f_{x,\sigma}: \Fm \xrightarrow[]{} \Ff_q$ so that $f_{x, \sigma}(y)$ is the most common value of $T(C)(y)$ over cubes $C \in \mathcal{C}_{x, \sigma}$ containing $y$. If there are no such cubes, assign the value arbitrarily.
\end{definition}

Since $(x, \sigma)$ is excellent, it follows immediately that $f_{x, \sigma}$ usually agrees with $T$ on points in cubes $C \in C_{x, \sigma}$.

\begin{lemma} \label{lm: local agree}
For $f_{x, \sigma}$ as defined above,
\[
\p_{C \in \mathcal{C}_{x, \sigma}, y \in C} [f_{x,\sigma}(y) = T(C)(y)] \geq 1- \gamma.
\]
\end{lemma}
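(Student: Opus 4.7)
The plan is to analyze, for each line $\ell$ through $x$, the distribution of $T(C)|_\ell$ for $C \in \mathcal{C}_{x,\sigma}$ containing $\ell$, and to bound the pointwise ``disagreement with $f_{x,\sigma}$'' at any $y\in\ell$ by the two-sample disagreement guaranteed by excellence.

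First, I would handle the trivial case $y=x$ (probability $q^{-3}$): here $T(C)(x)=\sigma=f_{x,\sigma}(x)$ so agreement always holds. For $y\neq x$, set $\ell:=\overline{xy}$, which automatically lies in $C$. The pair $(C,y)$ is then uniform over $\{(C,y):C\in\mathcal{C}_{x,\sigma},\,y\in C\setminus\{x\}\}$; viewing this as a triple $(C,\ell,y)$ with $y\in\ell\setminus\{x\}$ and summing out the $q-1$ values of $y$ on each line, the marginal on $(C,\ell)$ is uniform over $\{(C,\ell):C\in\mathcal{C}_{x,\sigma},\,\ell\subset C,\,\ell\ni x\}$. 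This coincides precisely with the marginal on $(C_1,\ell)$ produced by the excellence sampling: each line gets weight proportional to $N_\ell:=|\{C\in\mathcal{C}_{x,\sigma}:C\supseteq\ell\}|$, and conditional on $\ell$, $C$ is uniform on $\{C\in\mathcal{C}_{x,\sigma}:C\supseteq\ell\}$. Writing
\[
\eta_\ell:=\Pr_{\substack{C_1,C_2\in\mathcal{C}_{x,\sigma}\\ C_1,C_2\supseteq\ell}}\bigl[T(C_1)|_\ell\neq T(C_2)|_\ell\bigr],
\]
the excellence condition translates to $\E_\ell[\eta_\ell]\leq\gamma$ under this same marginal.

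Next, I would bound, for each fixed $\ell$,
\[
\delta_\ell:=\Pr_{\substack{C\in\mathcal{C}_{x,\sigma},\,C\supseteq\ell\\ y\in\ell\setminus\{x\}}}\bigl[T(C)(y)\neq f_{x,\sigma}(y)\bigr]\leq \eta_\ell.
\]
The crucial observation is that a cube in $\mathcal{C}_{x,\sigma}$ contains $y$ if and only if it contains $\overline{xy}=\ell$, so for $y\in\ell\setminus\{x\}$ the plurality defining $f_{x,\sigma}(y)$ is taken over exactly the same distribution $Q_v^y:=\Pr_C[T(C)(y)=v]$ with $C$ uniform on $\{C\supseteq\ell\}$. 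The elementary inequality $\sum_v(Q_v^y)^2\leq(\max_v Q_v^y)\sum_v Q_v^y=\max_v Q_v^y$ then yields
\[
1-\max_v Q_v^y\leq 1-\sum_v(Q_v^y)^2=\Pr_{C_1,C_2}\bigl[T(C_1)(y)\neq T(C_2)(y)\bigr].
\]
Averaging over $y\in\ell\setminus\{x\}$ and swapping the order of expectation, the right-hand side becomes $\Pr_{C_1,C_2,y}[T(C_1)(y)\neq T(C_2)(y)]$, which is bounded by $\Pr_{C_1,C_2}[T(C_1)|_\ell\neq T(C_2)|_\ell]=\eta_\ell$ since a pointwise disagreement forces the two restrictions to differ as polynomials. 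This gives $\delta_\ell\leq\eta_\ell$.

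Combining these two steps completes the proof:
\[
\Pr_{C\in\mathcal{C}_{x,\sigma},\,y\in C}\bigl[T(C)(y)\neq f_{x,\sigma}(y)\bigr]\leq \Pr[y=x]\cdot 0+\E_\ell[\delta_\ell]\leq \E_\ell[\eta_\ell]\leq\gamma.
\]
The main technical obstacle lies in the first step, namely carefully verifying that the $(C,\ell)$-marginal of the lemma's sampling agrees with that of the excellence condition; once this identification is in hand, the rest is a clean exchange between pointwise plurality and two-sample collision, and notably no degree-$d$ structure is required.
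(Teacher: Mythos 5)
Your proof is correct and takes essentially the same approach as the paper: a collision-probability lower bound on the plurality at each point $y$, the observation that cubes in $\mathcal{C}_{x,\sigma}\cap\mathcal{C}_y$ are exactly those containing the line $\overline{xy}$, upgrading pointwise disagreement to disagreement of the line restrictions, and then invoking the excellence condition. The only difference is cosmetic: you organize the argument by conditioning on the line $\ell$ and bounding $\delta_\ell\leq\eta_\ell$, whereas the paper works with $\gamma_y$ and strings the same inequalities together in a single chain.
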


\begin{proof}
Let
\[
\gamma_y = \p_{C_1, C_2 \in \mathcal{C}_y \cap \mathcal{C}_{x, \sigma}} [T(C_1)(y) \neq T(C_2)(y)].
\]
Since $f_{x,\sigma}(y)$ is the most common value of $T(C)(y)$ over cubes $C \in \mathcal{C}_y \cap \mathcal{C}_{x, \sigma}$. We can write,

\[
    1-\gamma_y = \sum_{\alpha \in \Ff_q} \p_{C \in \mathcal{C}_y \cap \mathcal{C}_{x, \sigma}}[T(C)(y) = \alpha]^2
    \leq  \p_{C \in \mathcal{C}_y \cap \mathcal{C}_{x, \sigma}}[T(C)(y) = f_{x, \sigma}(y)].
\]

Since this inequality holds for any $y$, it also holds under expectation over any distribution of $y$.
Take the distribution as: choose $C\in \mathcal{C}_{x,\sigma}$ uniformly, and then $y\in C$ uniformly different from $x$
(or equivalently, a uniformly random line $\ell\subseteq C$ containing
$x$ and then $y\in \ell$ uniformly different from $x$). Then, denoting by $\ell_{x,y}$ the line that
passes through $x$ and $y$, we get that
\begin{align*}
    \p_{C \in \mathcal{C}_{x, \sigma}, y\in C}[T(C)(y) = f_{x,\sigma}(y)] \geq \mathbb{E}_{C, y}[1-\gamma_y]
    &= \p_{\substack{C_1 \in \mathcal{C}_{x, \sigma}, y\in C_1 \\ C_2\in \mathcal{C}_{x, \sigma}\cap \mathcal{C}_y}}[T(C_1)(y) = T(C_2)(y)]\\
    &\geq
    \p_{\substack{C_1 \in \mathcal{C}_{x, \sigma}, y\in C_1 \\ C_2\in \mathcal{C}_{x, \sigma}\cap \mathcal{C}_y}}[T(C_1)|_{\ell_{x,y}} = T(C_2)|_{\ell_{x,y}}]\\
     &\geq 1-\gamma,
\end{align*}
where the last inequality is due to the assumption that $(x, \sigma)$ is excellent.
\end{proof}

Finally, we show that $f_{x, \sigma}$ is close to a degree $d$ polynomial by using the following robust characterization of low degree polynomials due to Rubinfeld and Sudan.

\begin{theorem}\label{thm:RSu}\emph{\cite{RSu}}
Let $f: \Ff_q^m \xrightarrow[]{} \Ff_q$, and let $N_{y,h} = \{y + i(h-y)\; | \; i \in [d+1] \}$. If,
\begin{equation*}
    \p_{y,h \in \Fm}[\; \exists p \text{ such that } p|_{N_{y,h}} = f|_{N_{y,h}}] \geq 1-\delta,
\end{equation*}
for $\delta \leq \frac{1}{2(d+2)^2}$, then $f$ is $2\delta$-close to some degree $d$ polynomial $g$.
\end{theorem}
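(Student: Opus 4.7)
The plan is to follow the classical self-correction paradigm of Rubinfeld and Sudan: define a candidate degree-$d$ polynomial $g$ by plurality decoding of $f$ along random lines, show that $g$ is Hamming-close to $f$, and then show that $g$ is genuinely a multivariate degree-$d$ polynomial rather than merely close to one.

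First, for each $x \in \Ff_q^m$, I would sample a random direction $h$ and let $v_h(x)$ denote the evaluation at $x$ of the univariate degree-$d$ polynomial interpolating $f$'s values on $d+1$ collinear points on the line through $x$ in direction $h$ (not including $x$ itself). Setting $g(x)$ to be the plurality of $v_h(x)$ over random $h$, I would argue closeness of $g$ to $f$ by a change of variables: for a fixed $x$ and random $h$, the probability that the $d+2$ collinear points (now including $x$) fit a univariate degree-$d$ polynomial is at least $1-O(\delta)$, and on such ``good'' lines the vote $v_h(x)$ equals $f(x)$. A Markov-style argument would then yield $g(x) = f(x)$ for at least a $(1-2\delta)$-fraction of $x$.

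The main obstacle --- and the step where the condition $\delta \leq \frac{1}{2(d+2)^2}$ is used --- is showing that $g$ is actually a degree-$d$ polynomial. I would first establish a \emph{robust voting} property: for \emph{every} $x$, the plurality value $g(x)$ is returned by a $(1-O(d^2\delta))$-fraction of directions. This is a pair-counting argument: two random lines through $x$ give disagreeing votes only if at least one of them fails the local test, which is rare. Next, I would show that $g$ restricted to any line $\ell$ is a univariate degree-$d$ polynomial: pick $d+2$ points on $\ell$ and couple their votes through a random plane containing $\ell$, which would force their $g$-values to lie on a common univariate degree-$d$ polynomial; the union bound over these $d+2$ couplings is exactly what imposes the quadratic-in-$d$ threshold on $\delta$. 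Finally, I would invoke the classical lemma that any function $\Ff_q^m \to \Ff_q$ whose restriction to every line is a univariate polynomial of degree at most $d$ is itself a multivariate polynomial of total degree at most $d$, concluding the proof.
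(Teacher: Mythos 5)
The paper does not prove this statement: it is invoked as a black-box result from Rubinfeld and Sudan~\cite{RSu}, so there is no internal proof in the paper to compare against. I will therefore assess your reconstruction on its own terms. Your overall plan is the right one and matches the classical self-correction argument: define $g$ by plurality decoding along lines, get $2\delta$-closeness of $g$ to $f$ by Markov, establish robust voting, show that the restriction of $g$ to every line is a degree-$d$ univariate polynomial, and invoke the line-restriction characterization of multivariate low degree. Steps 1, 2, 3b and 4 are essentially as in~\cite{RSu}.

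There is, however, a genuine gap in your justification of the robust voting step. You claim that for every $x$, two random directions $h_1,h_2$ yield disagreeing votes $v_{h_1}(x)\neq v_{h_2}(x)$ only if the local test fails on one of the two lines $N_{x,h_1}$, $N_{x,h_2}$, ``which is rare.'' The implication (disagreement implies a local failure at $x$) is correct, but the rarity claim is only true \emph{in expectation over $x$}: the hypothesis controls $\p_{y,h}$, so for a fixed bad $x$ the local test can fail on a constant fraction of lines through $x$, and your bound gives nothing there. Since robust voting must hold at \emph{every} point for step 3b (degree $d$ on every line) to go through, this does not suffice. The actual argument couples $v_{h_1}(x)$ and $v_{h_2}(x)$ through the grid $\{x + ih_1 + jh_2 : 1\le i,j\le d+1\}$, whose rows, columns and the relevant diagonals are (nearly) uniformly random lines that \emph{avoid $x$}; the global hypothesis then bounds each of their failure probabilities, and a union bound over $O(d^2)$ such lines --- which is precisely where the threshold $\delta \le \frac{1}{2(d+2)^2}$ enters --- forces the grid values and the two interpolants to lie on a common bivariate degree-$d$ polynomial, giving $v_{h_1}(x)=v_{h_2}(x)$. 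With this correction, the rest of your outline is sound.
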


\begin{lemma} \label{lm: close to d}
If $\p_{C \in \mathcal{C}_x }[T(C) \approxparam{2\gamma} f|_{C}] \geq \frac{\epsilon}{10}$, then $f$ is $4d\gamma$ close to some degree $d$ polynomial $g$.
\end{lemma}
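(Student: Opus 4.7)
The plan is to apply the Rubinfeld--Sudan characterization (Theorem~\ref{thm:RSu}), which requires showing that for uniformly random $(y,h) \in \Fm \times \Fm$, with high probability some degree $d$ polynomial agrees with $f$ on the $(d{+}1)$-point set $N_{y,h}$. The natural candidate polynomial is $T(C)|_{\ell_{y,h}}$ for a cube $C \in \mathcal{C}_x$ containing both $y$ and $h$ on which $T$ agrees with $f$ on most points: since $T(C)$ is a degree $d$ polynomial on $C$, its restriction to the line through $y$ and $h$ is automatically a univariate polynomial of degree at most $d$, hence a candidate $p$.

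To implement this, I would set $\mathcal{F}_x = \{C \in \mathcal{C}_x : T(C) \approxparam{2\gamma} f|_{C}\}$, so by hypothesis $\mu_x(\mathcal{F}_x) \geq \epsilon/10$. First I would work in the biased distribution in which $C$ is drawn uniformly from $\mathcal{F}_x$ and then $y,h \in C$ are drawn uniformly and independently. Each point $z_i = y + i(h-y)$ of $N_{y,h}$ is individually uniform in $C$: for $i = 1$ we just get $h$, and for $i \neq 1$ the map $y \mapsto (1-i)y + ih$ is a bijection of $C$ (a nonzero affine rescaling). Combining a union bound over $N_{y,h}$ with $T(C) \approxparam{2\gamma} f|_{C}$ yields
\begin{equation*}
    \Pr_{\substack{C \in \mathcal{F}_x \\ y,h \in C}}\bigl[\exists z \in N_{y,h}:\ T(C)(z) \neq f(z)\bigr] \leq 2\gamma(d+1).
\end{equation*}

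Next I would transfer this bound to uniform $(y,h) \in \Fm \times \Fm$ via the bipartite inclusion graph $G_3 = G(\Fm \setminus \{x\}, \mathcal{C}_x)$ from Lemma~\ref{lm: singular value}, whose second singular value satisfies $\lambda(G_3) \approx 1/q$. Applying Lemma~\ref{lm: edge sampling} with $B' = \mathcal{F}_x$ iteratively (first for the marginal of $y$, then, conditionally on $y$, for the marginal of $h$, or equivalently to a product version of the inclusion graph on pairs of points), the total deviation between the $(y,h)$-marginal of the biased process and the uniform distribution on $\Fm \times \Fm$ is bounded by $O\bigl(\lambda(G_3)/\sqrt{\mu_x(\mathcal{F}_x)}\bigr) = O\bigl(1/(q\sqrt{\epsilon})\bigr)$. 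Using $\epsilon \geq 10^7 d^6/q$, this is of order $1/(\sqrt{q}\, d^3)$, comfortably below $\gamma = 1/(1000 d^3)$. Therefore
\begin{equation*}
    \Pr_{y, h \in \Fm}\bigl[\nexists\, p\text{ of degree }d:\ p|_{N_{y,h}} = f|_{N_{y,h}}\bigr] \leq 2\gamma(d+1) + o(\gamma) =: \delta,
\end{equation*}
and one checks $\delta \leq \frac{1}{2(d+2)^2}$ from the choice of $\gamma$. Theorem~\ref{thm:RSu} then produces a degree $d$ polynomial $g$ with $f \approxparam{2\delta} g$, where $2\delta \leq 4d\gamma$ (up to the lower-order sampling error).

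The main obstacle I anticipate is the sampling step. The union bound giving $2\gamma(d+1)$ is clean, but applying Lemma~\ref{lm: edge sampling} to pairs $(y,h)$ requires either a product form of the inclusion graph or a careful two-step application, while the accumulated error must be kept strictly below $\gamma$. The numerical margin afforded by $\epsilon \geq 10^7 d^6/q$ is precisely what makes this transfer affordable; under a weaker soundness hypothesis, the sampling error could swamp the main term.
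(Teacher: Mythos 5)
Your route is genuinely different from the paper's. You sample the Rubinfeld--Sudan neighborhood $N_{y,h}$ by drawing the whole pair $(y,h)$ from a random cube $C \in \mathcal{F}_x$ and union-bounding over the points of $N_{y,h}$, each of which is individually uniform in $C$; the paper instead interposes \emph{lines}: first it shows (via the point--line inclusion graph inside each $C$) that for every $C \in \mathcal{F}_x$ at least a $1-\gamma$ fraction of the lines $\ell \subset C$ satisfy $f|_{\ell} \approxparam{3\gamma} T(C)|_{\ell}$, then transfers this to almost all $\ell \in \mathcal{L}$ via the line--cube graphs $G_1,G_2$, and only then draws $y,h$ from a random line. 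Your version is shorter, but it compresses the paper's two sampling steps into a single application to a pair--cube inclusion graph.

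That single application is where the gap lies, and it is quantitative rather than merely a bookkeeping detail you can defer. You invoke $G_3$ with $\lambda(G_3)\approx 1/q$ and claim a sampling error of $O(1/(q\sqrt{\eps}))$, but $G_3$ is the \emph{point}--cube graph; what your argument actually needs is the bipartite graph between ordered pairs $(y,h)$ and cubes $C\in\C_x$. For a non-degenerate pair, $y,h\in C$ iff the affine plane through $x,y,h$ is contained in $C$, so this graph factors through the plane--cube inclusion graph on subspaces through $x$, whose second singular value is $\approx 1/\sqrt{q}$ (the same order as $G_1$ and $G_4$), not $1/q$. The sampling error is therefore $\sqrt{10/(q\eps)}\leq\gamma$ rather than $o(\gamma)$. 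Your argument still closes, since the resulting bad probability on the order of $(2d+3)\gamma$ is comfortably below the Rubinfeld--Sudan threshold $\frac{1}{2(d+2)^2}$, but the margin is a factor of $\sqrt{q}$ thinner than you state, and you would need to actually establish the singular value of the pair--cube graph (it is not among those listed in Lemma~\ref{lm: singular value}) or carry out the two-step version you allude to, which is itself delicate because conditioning on $y$ changes the measure $\mu_{x,y}(\mathcal{F}_x\cap\C_y)$ in a way that must be controlled.
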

\begin{proof}
Let $\mathcal{F} = \{ C \in \C_x \; | \; T(C) \approxparam{2\gamma} f|_{C} \}$. Fix $C \in \mathcal{F}$. We first show that for almost all lines in $C$, $f$ agrees with the degree $d$ function $T(C)$ on almost all points on the line. We do this using Lemma~\ref{lm: edge sampling} and the spectral properties of 
$G_C = G(A \cup B, E)$ -- the bipartite inclusion graph where $A$ is the set of all points in $C$, $B$ is the set of all affine lines in $C$, and $E$ is the set of all point-line pairs $(x, \ell)$ such that $x \in \ell$. This graph is the same as $G_{5}$ in Lemma~\ref{lm: singular value}, with $\Ff_q^3 \cong \C$, so $\lambda(G_C) \leq \frac{2}{\sqrt{q}}$.

Let $A' = \{ y \in A \; | \; T(C)(y) \neq f(y) \}$, and let $B' = \{ \ell \in B \; | \; |N(\ell) \cap A'| \geq 3\gamma|N(\ell)| \}$, where $N(\ell)$ consists of all of the points contained in $\ell$. In other words, $A$ is the set of points in $C$ where $f$ disagrees with $T(C)$, and $B'$ is the set of lines where $f$ and $T(C)$ disagree on at least $3\gamma$ of the points. Then, by Lemma~\ref{lm: edge sampling}
\begin{equation*}
   \left|\p_{\ell \in B', y \in N(\ell)} [y \in A'] - \p_{y \in A, y \in C \cap B'}[y \in A']\right| \leq \frac{\lambda(G_C)}{\sqrt{\frac{|B'|}{|B|}}}.
\end{equation*}
By construction, $\p_{\ell \in B', y \in N(\ell)} [y \in A'] \geq 3 \gamma$ and by assumption
\[
\p_{y \in A, C \in N(y) \cap B'}[y \in A'] \leq \p_{y \in A}[y \in A'] \leq 2\gamma,
\]
so $|B'| \leq \left(\frac{\lambda(G_C)}{\gamma}\right)^2 |B| \leq \gamma|B|$. Thus, for any cube $C \in \mathcal{F}$, $f$ is $3\gamma$-close to a degree $d$ polynomial on at least a $1-\gamma$ fraction of the lines in $C$. We next show that $\mathcal{F}$ contains enough cubes to cover nearly all lines. 

Consider the bipartite inclusion graphs $G_1 = G(\mathcal{L}\setminus\mathcal{L}_x, \C_x)$ and $G_2 = G(\mathcal{L}_x, \C_x)$, which have second singular values at most $\frac{1}{\sqrt{q}}$ and $\frac{1}{q}$ respectively by Lemma~\ref{lm: singular value}. Let $E'$ denote the set of line-cube pairs $(\ell, C)$ such that $\ell \subseteq C$ and $T(C)|_{\ell} \approxparam{3\gamma} f|_{\ell}$. Recall that $|\mathcal{F}|/|\C_x| \geq \epsilon/10$, so by Lemma~\ref{lm: edge sampling} on $G_1$,

\begin{equation*}
    \left|\p_{\ell, C \in N(\ell) \cap F}[(\ell, C) \in E'\; | \; \ell \notin \mathcal{L}_x] - \p_{C \in F, \ell \in C}[(\ell, C) \in E' \; | \; \ell \notin \mathcal{L}_x]\right|
    \leq \frac{1/\sqrt{q}}{\sqrt{\eps/10}}
    =
    \sqrt{\frac{10}{q\epsilon}} \leq \gamma.
\end{equation*}
By Lemma~\ref{lm: edge sampling} on $G_2$,
\begin{equation*}
    \left|\p_{\ell, C \in N(\ell) \cap F}[(\ell, C) \in E'\; | \; \ell \in \mathcal{L}_x] - \p_{C \in F, \ell \in C}[(\ell, C) \in E' \; | \; \ell \in \mathcal{L}_x]\right|
    \leq \frac{1/q}{\sqrt{\eps/10}}
    =
    \sqrt{\frac{10}{q^2\epsilon}} \leq \gamma.
\end{equation*}

We showed previously that for every $C \in \mathcal{F}$, $\p_{\ell \in C}[(\ell, C) \in E'] \geq 1 - \gamma$. Let $p = |\mathcal{L}_x| / |\mathcal{L}|$ be the probability that a randomly chosen line contains $x$. Then,
\begin{align*}
    \p_{\ell}[\exists C \text{ s.t. } (\ell, C) \in E'] &\geq \p_{\ell, C \in N(\ell) \cap F}[(\ell, C) \in E'] \\
    &= (1-p) \p_{\ell, C \in N(\ell) \cap F}[(\ell, C) \in E'\; | \; \ell \notin \mathcal{L}_x] + p  \p_{\ell, C \in N(\ell) \cap F}[(\ell, C) \in E'\; | \; \ell \in \mathcal{L}_x] \\
    &\geq (1-p) \p_{C \in F, \ell \in C}[(\ell, C) \in E' \; | \; \ell \notin \mathcal{L}_x] + p  \p_{C \in F, \ell \in C}[(\ell, C) \in E' \; | \; \ell \in \mathcal{L}_x] - \gamma \\
    &= \p_{\ell \in C}[(\ell, C) \in E'] -\gamma \\
    &\geq 1 - 2\gamma.
\end{align*}

Therefore, $f$ is $3\gamma$-close to a degree $d$ polynomial on at least $1-2\gamma$ of the lines in $\mathcal{L}$. This allows us to bound the probability that, for a randomly chosen neighborhood, $f$ is equal to a low degree polynomial on that neighborhood. To do this, we view picking a random neighborhood as randomly picking a line $\ell \in \mathcal{L}$, and then $y, h \in \ell$ uniformly at random.

\begin{align*}
    \p_{y, h \in \Fm}[\exists C \text{ s.t. } f|_{N_{y,h}} = T(C)|_{N_{y,h}}]
    &\geq
    \p_{\ell}[\exists C \text{ s.t. } (\ell, C) \in E'] \\
    &\cdot
    \p_{\ell, y,h \in \ell}[ f|_{N_{y,h}} = T(C)|_{N_{y,h}}\; | \;\exists C \text{ s.t. } (\ell, C) \in E']\\
    &\geq (1-2\gamma)(1-(d+2)3\gamma) \\
    &\geq 1-5d\gamma.
\end{align*}
Since $\gamma = \frac{1}{1000d^3}\leq \frac{1}{10(d+2)^3}$. Theorem~\ref{thm:RSu} now implies that there is a degree $d$ polynomial $g$ such that $f$ is $4d\gamma$ close to $g$.
\end{proof}

Combining Lemmas \ref{lm: local agree} and \ref{lm: close to d} we get that for each excellent $(x,\sigma)$, the plurality vote function $f_{x,\sigma}$ is $4d\gamma$-close to a degree $d$ polynomial.
\begin{lemma} \label{lm: gxsigma close}
For each excellent $(x, \sigma)$, the function $f_{x,\sigma}$ is $4d\gamma$-close to a degree $d$ polynomial.
\end{lemma}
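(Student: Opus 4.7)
The plan is a short combination of Lemma~\ref{lm: local agree} and Lemma~\ref{lm: close to d}, bridged by a single Markov averaging step that converts a point-wise agreement guarantee on average cubes into a pointwise agreement guarantee for many individual cubes.

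First, I would apply Lemma~\ref{lm: local agree}, which says
\[
\Pr_{C \in \mathcal{C}_{x,\sigma},\, y \in C}[f_{x,\sigma}(y) = T(C)(y)] \geq 1 - \gamma.
\]
Rewriting this as $\E_{C \in \mathcal{C}_{x,\sigma}}\bigl[\Pr_{y \in C}[f_{x,\sigma}(y) \neq T(C)(y)]\bigr] \leq \gamma$ and applying Markov's inequality to the bounded random variable $\Pr_{y \in C}[f_{x,\sigma}(y) \neq T(C)(y)]$ in $C$, we obtain that at least half of the cubes $C \in \mathcal{C}_{x,\sigma}$ satisfy $T(C) \approxparam{2\gamma} f_{x,\sigma}|_C$.

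Next, because $(x,\sigma)$ is excellent, Definition~\ref{def: excellent} gives $\mu_x(\mathcal{C}_{x,\sigma}) \geq \epsilon/5$. Hence the fraction of cubes $C \in \mathcal{C}_x$ for which $T(C) \approxparam{2\gamma} f_{x,\sigma}|_C$ holds is at least $\tfrac{1}{2}\cdot\tfrac{\epsilon}{5} = \tfrac{\epsilon}{10}$. This is exactly the hypothesis of Lemma~\ref{lm: close to d} applied to $f = f_{x,\sigma}$, which yields a degree $d$ polynomial $g$ that is $4d\gamma$-close to $f_{x,\sigma}$, as claimed.

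There is no serious obstacle here; the only thing to notice is that the excellence condition was tailor-made so that the density $\epsilon/5$ of $\mathcal{C}_{x,\sigma}$ inside $\mathcal{C}_x$ survives the factor of $2$ lost to Markov, leaving just enough mass ($\epsilon/10$) to feed into Lemma~\ref{lm: close to d}. All heavy lifting (the Rubinfeld--Sudan based argument and the inclusion-graph sampling) has already been done inside Lemma~\ref{lm: close to d}, so this lemma really is just the packaging step.
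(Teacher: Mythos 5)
Your proof is correct and follows exactly the same route as the paper: Markov's inequality applied to Lemma~\ref{lm: local agree} gives that at least half the cubes in $\mathcal{C}_{x,\sigma}$ satisfy $T(C) \approxparam{2\gamma} f_{x,\sigma}|_C$, and combined with $\mu_x(\mathcal{C}_{x,\sigma}) \geq \epsilon/5$ this meets the $\epsilon/10$ hypothesis of Lemma~\ref{lm: close to d}. You have simply spelled out the Markov step that the paper compresses into one line.
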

\begin{proof}

Using Markov's inequality and Lemma~\ref{lm: local agree}, $\p_{C \in C_x}[T(C) \approxparam{2\gamma} f_{x,\sigma}|_{C}] \geq \frac{1}{2}\mu_x(\C_{x,\sigma})\geq \eps/10$. Thus, $f_{x,\sigma}$ satisfies the conditions of Lemma~\ref{lm: close to d}, and as a result, it is $4d\gamma$-close to some degree $d$ polynomial.
\end{proof}

\begin{lemma} \label{lm: gxsigma}
If $(x,\sigma)$ is excellent, then there exists a degree $d$ polynomial $g_{x, \sigma}$ such that $g_{x,\sigma}$ agrees with $T(C)$ for at least $1-2\sqrt{\gamma}$ of the cubes $C \in C_{x, \sigma}$.
\end{lemma}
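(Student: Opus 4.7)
The plan is to take $g_{x,\sigma}$ to be the degree-$d$ polynomial produced by Lemma~\ref{lm: gxsigma close}, so that $f_{x,\sigma}\approxparam{4d\gamma} g_{x,\sigma}$ as functions on $\Ff_q^{\num}$. The strategy is to combine this \emph{global} closeness of $f_{x,\sigma}$ to a degree-$d$ polynomial with the \emph{local} agreement of $T(C)$ with $f_{x,\sigma}$ coming from Lemma~\ref{lm: local agree}, and then exploit Schwartz--Zippel on each cube: two distinct degree-$\le d$ polynomials on a $3$-dimensional cube agree on at most a $d/q$ fraction of its points. Hence, to conclude $T(C)\equiv g_{x,\sigma}|_C$ on a cube $C$, it suffices that $T(C)$ and $g_{x,\sigma}|_C$ agree on strictly more than a $d/q$ fraction of points of $C$, and the goal is to show this happens for all but a $2\sqrt{\gamma}$ fraction of the cubes in $\mathcal{C}_{x,\sigma}$.

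I would bound the expected pointwise disagreement of $T(C)$ and $g_{x,\sigma}|_C$ over a random cube $C\in\mathcal{C}_{x,\sigma}$ and a random $y\in C\setminus\{x\}$ via the triangle inequality
\[
\Pr[T(C)(y)\neq g_{x,\sigma}(y)]\le \Pr[T(C)(y)\neq f_{x,\sigma}(y)]+\Pr[f_{x,\sigma}(y)\neq g_{x,\sigma}(y)].
\]
The first summand is at most $\gamma$ directly by Lemma~\ref{lm: local agree}. For the second, since the event depends only on $y$, I would apply Lemma~\ref{lm: edge sampling} to the bipartite inclusion graph $G_3=G(\Ff_q^{\num}\setminus\{x\},\mathcal{C}_x)$, whose second singular value is $\lambda(G_3)\approx 1/q$ by Lemma~\ref{lm: singular value}, taking the distinguished side to be $B'=\mathcal{C}_{x,\sigma}$ of density $\mu_x(\mathcal{C}_{x,\sigma})\ge\epsilon/5$. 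This transfers the global density bound $\Pr_y[f_{x,\sigma}(y)\neq g_{x,\sigma}(y)]\le 4d\gamma$ to the conditional distribution on $\mathcal{C}_{x,\sigma}$ with an additive sampling error of $\lambda(G_3)/\sqrt{\epsilon/5}=O(1/\sqrt{q\epsilon})$; since $\epsilon\ge 10^7d^6/q$, this error is comfortably dominated by $d\gamma$.

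Combining the two bounds gives
\[
\mathbb{E}_{C\in\mathcal{C}_{x,\sigma}}\!\Bigl[\Pr_{y\in C\setminus\{x\}}[T(C)(y)\neq g_{x,\sigma}(y)]\Bigr]\le 5d\gamma+o(d\gamma).
\]
Markov's inequality then bounds the fraction of cubes $C\in\mathcal{C}_{x,\sigma}$ on which the pointwise disagreement reaches $1-d/q$ by roughly $5d\gamma/(1-d/q)\le 10d\gamma$. For every other cube, $T(C)$ and $g_{x,\sigma}|_C$ are two degree-$\le d$ polynomials agreeing on more than a $d/q$ fraction of $C$, so by Schwartz--Zippel they must coincide identically. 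A short arithmetic check using $\gamma=\tfrac{1}{1000d^3}$ verifies that $10d\gamma\le 2\sqrt{\gamma}$ for every $d\ge 1$, yielding the desired fraction.

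The main obstacle is executing the edge-sampling step cleanly: one must confirm that $G_3$ is the correct graph (so that the event we transfer depends only on the point side), and that the sampling loss of order $1/\sqrt{q\epsilon}$ is genuinely absorbed by $d\gamma$ --- this is precisely what the hypothesis $\epsilon\ge 10^7d^6/q$ is engineered to guarantee. The negligible corner case $y=x$ (stemming from the distinction between $y\in C$ and $y\in C\setminus\{x\}$) contributes only $O(1/q^3)$ to all quantities and can be absorbed.
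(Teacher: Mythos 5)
Your proof is correct, and it streamlines the paper's argument in a nice way. The paper proceeds in two Markov steps: it first restricts to the set $\mathcal{F}=\{C\in\C_x : T(C)\approxparam{\sqrt{\gamma}} f_{x,\sigma}|_C\}$ (via Markov on Lemma~\ref{lm: local agree}), then applies the edge-sampling Lemma~\ref{lm: edge sampling} on $G_3$ with $B'=\mathcal{F}$ to show that for most $C\in\mathcal{F}$ the function $g_{x,\sigma}$ agrees with $f_{x,\sigma}$ on most of $C$, and only then invokes Schwartz--Zippel and chains the two closeness bounds. You collapse these two stages into one by applying the pointwise triangle inequality $\Pr[T(C)(y)\neq g_{x,\sigma}(y)]\le\Pr[T(C)(y)\neq f_{x,\sigma}(y)]+\Pr[f_{x,\sigma}(y)\neq g_{x,\sigma}(y)]$ up front, bounding the expected total disagreement over $(C,y)$ with $C\in\C_{x,\sigma}$, and then running a single Markov followed by Schwartz--Zippel. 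You also take $B'=\C_{x,\sigma}$ (density $\ge\eps/5$) directly in the sampling step rather than the intermediate $\mathcal{F}$; both choices satisfy the density requirement of Lemma~\ref{lm: edge sampling}. Your version even gives a slightly stronger bound ($1-O(d\gamma)$ rather than $1-2\sqrt{\gamma}$), which implies the stated claim for $\gamma=1/(1000d^3)$ as you check. Two small points worth tightening if you write this out fully: the second-term bound should carry the same mild correction the paper uses (their factor of $2$, since $y$ ranges over $\Fm\setminus\{x\}$ rather than all of $\Fm$), giving roughly $\gamma+8d\gamma+\text{(sampling error)}\le 9d\gamma$ rather than $5d\gamma$, though the final comparison $\le 2\sqrt{\gamma}$ still holds comfortably; and the threshold ``disagreement $<1-d/q$'' over $y\in C\setminus\{x\}$ should be taken with a hair of slack (say $1-2d/q$) so that after the $O(1/q^3)$ correction for excluding $x$ you still clear the Schwartz--Zippel cutoff of strictly more than $dq^2$ agreement points out of $q^3$.
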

\begin{proof}
Let $g_{x,\sigma}$ be the degree $d$ polynomial guaranteed by Lemma~\ref{lm: gxsigma close} and define,
\[
\mathcal{F} = \{C \in \C_x \; | \; f_{x,\sigma}|_{C} \approxparam{\sqrt{\gamma}} T(C) \},
\qquad
A' = \{ z \in \Fm \; | \; f_{x,\sigma}(z) \neq g_{x,\sigma}(z) \}.
\]
Then $\mu(A') \leq 4d\gamma$ and and by Markov's inequality and Lemma~\ref{lm: local agree}, $\mu_x(\mathcal{F}) \geq (1 - \sqrt{\gamma})\mu_x(\C_{x,\sigma}) \geq \frac{\epsilon}{10}$.

The bipartite inclusion graph $G_3 = G(\Fm \setminus\{x\}, \C_x)$ has second singular value at most $1/q$ by Lemma~\ref{lm: singular value}. By Lemma~\ref{lm: edge sampling} we have,

\begin{equation*}
    \left| \Pr_{C \in \mathcal{F}, y\in N(C)} [y \in A'] - \Pr_{y \in \Fm \setminus x, C \in N(y) \cap \mathcal{F}}[y \in A'] \right|
    \leq \frac{1/q}{\sqrt{\mu_x(\mathcal{F})}}
    \leq
    \frac{1/q}{\sqrt{\epsilon/10}} \leq 4\gamma.
\end{equation*}

Since $\Pr_{y \in \Fm \setminus\{x\}, C \in N(y) \cap \mathcal{F}}[y \in A'] \leq 2\mu(A')$, where the factor of $2$ accounts for the fact that $y \neq x$, it follows that $\Pr_{C \in \mathcal{F}, y \in N(C)} [y \in A'] \leq 8d\gamma + 4\gamma \leq 9d\gamma$. By applying Markov's inequality again, for at least $1- \sqrt{\gamma}$ of the cubes $C \in \mathcal{F}$, $g_{x,\sigma}$ and $f_{x,\sigma}$ agree on at least $1-9d\sqrt{\gamma}$ of the points in $C$. Finally, since these cubes are in $\mathcal{F}$, $g_{x,\sigma}$ and $T(C)$ must agree on at least $1 - \sqrt{\gamma} - 9d\sqrt{\gamma} > \frac{d}{q}$ of the points in $C$. By the Schwartz-Zippel lemma, $g_{x,\sigma}|_{C} = T(C)$ for such $C$, and these cubes make up at least $(1-\sqrt{\gamma})^2 > 1-2\sqrt{\gamma}$ of the cubes in $\C_{x,\sigma}$.
\end{proof}

\paragraph{Concluding this section.} For each $(x, \sigma) \in X$ and degree $d$ polynomial $g_{x, \sigma}$, let $\mathcal{F}_{x,\sigma}$ be the set of all cubes $C \in \C_{x,\sigma}$ satisfying $T(C) = g_{x,\sigma|C}$, and $\mathcal{D}_{x,\sigma} = \C_{x,\sigma} \setminus \mathcal{F}_{x,\sigma}$. By Lemma~\ref{lm: gxsigma},  $\mu_x(\mathcal{F}_{x,\sigma}) \geq (1-2\sqrt{\gamma})\mu_x(C_{x,\sigma})$ and $\mu_x(D_{x,\sigma}) \leq \frac{2\sqrt{\gamma}}{1-2\sqrt{\gamma}} \mu_x(\mathcal{F}_{x,\sigma}) \leq \sqrt{5\gamma}\mathcal{F}_{x,\sigma}$.



\subsection{Pruning the Pass Probability} \label{sec: prune}
Throughout this section, we consider the distribution $\mathcal{D}$ over tuples $(x, \sigma, C_1, y, \tau, C_2)$ defined by the following process:
\begin{enumerate}
    \item Choose $x \in \mathbb{F}_q^{\num}$ uniformly at random.
    \item Choose $\sigma \in \mathbb{F}_q$ with weight proportional to $\mu_x(\C_{x, \sigma})$.
    \item Choose $C_1 \in \C_{x, \sigma}$ uniformly at random.
    \item Choose $y \in C_1$ uniformly at random and different from $x$
    \item Choose $\tau \in \mathbb{F}_q$ with weight proportional to the fraction of cubes $C \in \C_{x,y}$ with $\dim(C \cap C_1) = 2$ satisfying $T(C)(y) = \tau$.
    \item Choose $C_2$ uniformly at random over $\{C \in \Cxys  \; | \; \dim(C \cap C_1) = 2\}$.
\end{enumerate}

Notice that the marginal distribution over $x,y$ is uniform, and conditioned on $x,y$ 
the marginal distribution over $(C_1,C_2)$ is uniform among all pairs of cubes intersecting in a plane $P$ that contains both $x$ and $y$. 
In particular, the marginal of $(C_1,C_2)$ is uniform among cubes that intersect in a plane, and so letting 
$S = \{(C_1, C_2) \; | \; \dim(C_1 \cap C_2) = 2, \; T(C_1)|_{C_1\cap C_2} = T(C_2)|_{C_1\cap C_2} \}$ we have
\begin{equation}\label{eq:prune1}
   \Pr_{\mathcal{D}}\left[ (C_1, C_2) \in S\right] \geq  \epsilon.
\end{equation}

The distribution $\D$ can also be generated by either first choosing $y \in \Fm$ and proceeding symmetrically, or by first choosing $(x,y) \in \Fm$ and then choosing $(C_1, C_2)$ as a random edge in the induced subgraph of $\AG(3,2)$ on $C_{xy}$ and assigning $(\sigma, \tau)$ accordingly.

The main goal of this section is to prove the following lemma, asserting that we may focus on the cases in which $C_1, C_2$ lie in  
$\mathcal{F}_{\substack{x, \sigma\\ y, \tau}} = \mathcal{F}_{x, \sigma} \cap \mathcal{F}_{y, \tau}$. Namely, 
we show that a constant fraction of the probability on the left hand side of~\eqref{eq:prune1} comes from the event that $C_1,C_2$ are both in $\mathcal{F}_{x, \sigma} \cap \mathcal{F}_{y, \tau}$.
\begin{lemma} \label{lm: pass fxsig}
\begin{equation*}
    \Pr_{\D}\left[(C_1,C_2) \in S \text{ and } C_1, C_2 \in \Fxys\right] \geq \frac{\epsilon}{4}.
\end{equation*}
\end{lemma}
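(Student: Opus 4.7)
The plan is to prove the claim by bounding the contribution to the pass probability~\eqref{eq:prune1} coming from the complement of the event $C_1, C_2 \in \Fxys$, showing this complement contributes at most $3\epsilon/4$, so that at least $\epsilon - 3\epsilon/4 = \epsilon/4$ of the pass probability comes from the good event. By a union bound, the complement is covered by six bad events: $(x, \sigma) \notin X$; $(y, \tau) \notin X$; and (for excellent pairs) $C_i \in \mathcal{D}_{x, \sigma}$ for $i \in \{1,2\}$ or $C_i \in \mathcal{D}_{y, \tau}$ for $i \in \{1,2\}$.

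For the non-excellence contribution, treat $(x, \sigma) \notin X$; the case $(y, \tau) \notin X$ is symmetric. By Lemma~\ref{lm: pxsig}, $(x, \sigma) \notin X$ implies one of two sub-cases. In the sub-case $\mu_x(\mathcal{C}_{x, \sigma}) < \epsilon/5$, conditional on $(x, \sigma)$ the event ``pass'' forces $C_2 \in \mathcal{C}_{x, \sigma}$, which by Lemma~\ref{lm: expansion} occurs with probability at most $\mu_x(\mathcal{C}_{x, \sigma}) + 1/q < \epsilon/5 + 1/q$; summing over $(x,\sigma)$ gives a contribution of at most $\epsilon/5 + 1/q$. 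In the sub-case $p_{x, \sigma} > \gamma(1 - \Phi_x(\mathcal{C}_{x, \sigma}))$, the same upper bound on pass yields a contribution at most $\gamma^{-1}\E_x\big[\sum_\sigma \mu_x(\mathcal{C}_{x, \sigma})\,p_{x, \sigma}\big]$. Unfolding the definition of $p_{x, \sigma}$, this sum is the probability of ``agreeing at the point $x$ but disagreeing on the line $\ell$'' in the distribution where $x$ is uniform in $\Fm$, $C_1$ is uniform in $\mathcal{C}_x$, $\ell \subset C_1$ is a uniform line through $x$, and $C_2$ is uniform with $C_1 \cap C_2 = \ell$. Since two distinct degree $d$ polynomials on $\ell$ agree on at most $d$ of its $q$ points by Schwartz--Zippel, this point-line gap is at most $d/q$, so the contribution is at most $d/(\gamma q) = O(d^4/q) \ll \epsilon$.

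For the four ``bad cube'' events, recall from Section~\ref{sec: local} that $\mu_x(\mathcal{D}_{x, \sigma}) \leq 2\sqrt{\gamma}\,\mu_x(\mathcal{C}_{x, \sigma})$ when $(x, \sigma) \in X$. Consider $\Pr_\D[\text{pass}, (x, \sigma) \in X, C_1 \in \mathcal{D}_{x, \sigma}]$: pass forces $C_2 \in \mathcal{C}_{x, \sigma}$, and using an expander-mixing argument in $G_x \cong \G(3, 2)$ (whose second eigenvalue is $O(1/q)$ by Lemma~\ref{lm: expansion}) combined with the above bound on $\mu_x(\mathcal{D}_{x,\sigma})$, this contribution is at most $2\sqrt{\gamma}\,\E_x\big[\sum_{\sigma:(x, \sigma) \in X}\mu_x(\mathcal{C}_{x, \sigma})^2\big] + o(\epsilon) = O(\sqrt{\gamma})\,\alpha_{303}(T) = O(\sqrt{\gamma}\,\epsilon)$, using Theorem~\ref{th: related tests} to bound $\alpha_{303}(T) \leq (1+o(1))\epsilon$. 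The other three symmetric events reduce to this bound via the $(x, \sigma, C_1) \leftrightarrow (y, \tau, C_2)$ symmetry of $\D$, together with the observation that the marginal of $(x, C_2)$ in $\D$ equals that of $(x, C_1)$, namely $x$ uniform in $\Fm$ and $C_2$ uniform in $\mathcal{C}_x$ given $x$. With $\gamma = \Theta(1/d^3)$, the total contribution of the bad-cube events is $O(\epsilon/d^{3/2}) \ll \epsilon$.

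The most delicate step I anticipate is the large-$p_{x, \sigma}$ sub-case: unfolding its definition to identify the weighted sum as a point-versus-line agreement gap, and then bounding that gap by $d/q$ via Schwartz--Zippel on the line. The rest of the argument is bookkeeping, but care with constants is needed to make the six contributions sum to at most $3\epsilon/4$.
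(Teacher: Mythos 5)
Your proposal is correct and follows the same overall structure as the paper's proof: a union bound over six bad events (two non-excellence events, four events where some $C_i$ lies in the corresponding $\mathcal{D}$-set), an identical treatment of the non-excellence case (Lemma~\ref{lm: pass nonexcellent} in the paper proceeds exactly as you describe, including the unfolding of $p_{x,\sigma}$ and the Schwartz--Zippel bound of $d/q$), and the same use of the symmetry of $\D$ (the paper records this as Lemma~\ref{lm: D symmetry}) to reduce to a single bad-cube event. The one place you genuinely diverge is in bounding $\Pr_\D[\text{pass}, (x,\sigma)\in X, C_1\in\mathcal{D}_{x,\sigma}]$. The paper's Lemma~\ref{lm: pass dxsig} exploits Schwartz--Zippel a second time: since $T(C_1)\neq g_{x,\sigma}|_{C_1}$, with probability at least $1-d/q$ they also disagree on the random plane $P\ni x$, and in that case passing forces $C_2\in\mathcal{D}_{x,\sigma}$ (not merely $C_2\in\mathcal{C}_{x,\sigma}$). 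This yields $\sum_\sigma\mu_x(\mathcal{D}_{x,\sigma})^2 + O(d/q)\leq 5\gamma\sum_\sigma\mu_x(\mathcal{F}_{x,\sigma})^2+O(d/q)$, which the paper bounds directly by $O(\gamma\eps)$ using the fact that $\E_x\bigl[\sum_\sigma\mu_x(\mathcal{F}_{x,\sigma})(1-\Phi_x(\mathcal{F}_{x,\sigma}))\bigr]$ is at most the pass probability. You instead only force $C_2\in\mathcal{C}_{x,\sigma}$, apply expander mixing to get $\sum_\sigma\mu_x(\mathcal{D}_{x,\sigma})\mu_x(\mathcal{C}_{x,\sigma})+O(1/q)$, identify the main term with $2\sqrt{\gamma}\,\alpha_{303}(T)$, and then invoke Theorem~\ref{th: related tests} to bound $\alpha_{303}(T)\leq(1+o(1))\eps$. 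Your bound is $O(\sqrt{\gamma}\,\eps)$ per bad-cube event rather than the paper's $O(\gamma\eps)$, which is weaker but still comfortably below the required budget for $d$ above a small constant (you need roughly $8\sqrt{\gamma}\leq 1/4$, i.e. $\gamma\leq 1/1024$, which holds for $\gamma=1/(1000d^3)$ once $d\geq 2$). The paper's version is tighter and more self-contained (it does not need Theorem~\ref{th: related tests}); your version is a bit more ``off the shelf'' in that it only uses the crude inclusion $\mathcal{D}_{x,\sigma}\subseteq\mathcal{C}_{x,\sigma}$ plus a known equivalence between tests. Both work.
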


\skipi
The rest of this section is devoted to the proof of Lemma~\ref{lm: pass fxsig}. Towards this end, recall from Section~\ref{sec: local} that for each $(x, \sigma) \in X$ we write $\C_{x,\sigma} = \mathcal{F}_{x,\sigma} \sqcup \mathcal{D}_{x,\sigma}$. Therefore, we can deduce Lemma~\ref{lm: pass fxsig} by showing that the probability of passing the test, 
i.e. $(C_1,C_2)\in S$ in conjunction of either one of the events: (a) $(x, \sigma) \notin X$ and (b) $(x, \sigma) \in X$ but $C_1 \in D_{x,\sigma}$, is small. We begin
by bounding the first event.
\begin{lemma} \label{lm: pass nonexcellent}
 \begin{equation*}
         \Pr_{\D}\left[(C_1, C_2) \in S, \; (x, \sigma) \notin X \right] \leq \frac{\epsilon}{4}.
 \end{equation*}
\end{lemma}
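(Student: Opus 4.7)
My plan is to use Lemma~\ref{lm: pxsig} to split the non-excellent pairs into two overlapping classes and bound each contribution separately. By Lemma~\ref{lm: pxsig}, every $(x,\sigma)\notin X$ satisfies at least one of: case (a) $\mu_x(\C_{x,\sigma})<\epsilon/5$, or case (b) $p_{x,\sigma}>\gamma(1-\Phi_x(\C_{x,\sigma}))$. A union bound then reduces the task to bounding $\Pr_\D[(C_1,C_2)\in S,\text{(a)}]$ and $\Pr_\D[(C_1,C_2)\in S,\text{(b)}]$.

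For case (a), the key observation is that plane agreement on $C_1\cap C_2$ forces $T(C_2)(x)=T(C_1)(x)=\sigma$, so $C_2\in\C_{x,\sigma}$. Thus $\Pr_\D[(C_1,C_2)\in S\mid x,\sigma]\leq 1-\Phi_x(\C_{x,\sigma})\leq \mu_x(\C_{x,\sigma})+1/q$ by Lemma~\ref{lm: expansion}. Using $\mu_x(\C_{x,\sigma})<\epsilon/5$ to bound $\mu_x(\C_{x,\sigma})^2\leq(\epsilon/5)\mu_x(\C_{x,\sigma})$ and summing over $\sigma$, the contribution is at most $\epsilon/5+1/q$.

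For case (b) the same bound gives $\Pr_\D[\text{pass}\mid x,\sigma]\leq 1-\Phi_x(\C_{x,\sigma})$, so the case (b) contribution is at most $\E_x\sum_{\sigma:\text{(b)}}\mu_x(\C_{x,\sigma})(1-\Phi_x(\C_{x,\sigma}))$. The heart of the argument is a global estimate on $\E_x\sum_\sigma\mu_x(\C_{x,\sigma})\,p_{x,\sigma}$. Unwrapping the definition of $p_{x,\sigma}$ and aggregating over $\sigma=T(C_1)(x)$, this quantity equals the probability, under the line-intersection test distribution with a uniformly random marked point $x$ on the intersection line $\ell$, that the point-test passes at $x$ while the line-test fails:
\[
\E_x\sum_\sigma\mu_x(\C_{x,\sigma})\,p_{x,\sigma} = \Pr\left[T(C_1)(x)=T(C_2)(x),\ T(C_1)|_\ell\neq T(C_2)|_\ell\right].
\]
By Schwartz-Zippel, conditioned on the line-test failing, $T(C_1)|_\ell-T(C_2)|_\ell$ is a nonzero degree-at-most-$d$ polynomial on $\ell\cong\Ff_q$, hence has at most $d$ roots, so the conditional point-pass probability is at most $d/q$. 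This yields $\E_x\sum_\sigma\mu_x(\C_{x,\sigma})\,p_{x,\sigma}\leq d/q$. Combined with the case (b) hypothesis,
\[
\gamma\cdot\E_x\sum_{\sigma:\text{(b)}}\mu_x(\C_{x,\sigma})(1-\Phi_x(\C_{x,\sigma})) < \E_x\sum_\sigma\mu_x(\C_{x,\sigma})\,p_{x,\sigma} \leq \frac{d}{q},
\]
so the case (b) contribution is at most $d/(\gamma q)=1000\,d^4/q$.

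Finally, using $\epsilon\geq 10^7 d^6/q$ one has $1/q\leq \epsilon/10^7$ and $1000\,d^4/q\leq \epsilon/(10^4 d^2)\leq \epsilon/10^4$, so the total non-excellent contribution is at most $\epsilon/5+\epsilon/10^7+\epsilon/10^4<\epsilon/4$ as required. The hard part is recognizing the Schwartz-Zippel trick for the global estimate on $\sum_\sigma\mu_x(\C_{x,\sigma})\,p_{x,\sigma}$; once that estimate is in hand, the case analysis and final arithmetic are routine.
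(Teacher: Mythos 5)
Your proposal is correct and follows essentially the same route as the paper: split the non-excellent pairs via Lemma~\ref{lm: pxsig}, bound the small-measure case by $1-\Phi_x(\C_{x,\sigma})\leq\mu_x(\C_{x,\sigma})+1/q$ to get $\epsilon/5 + 1/q$, and bound the large-$p_{x,\sigma}$ case by pushing $1-\Phi_x$ through the defining inequality and invoking Schwartz--Zippel on the line-vs-point event to get $d/(\gamma q)=1000d^4/q$. The only difference is cosmetic: you carry the $1/q$ term explicitly through the final arithmetic, which the paper silently absorbs into the slack.
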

\begin{proof}
Consider the following alternative way of sampling according to $\mathcal{D}$. First, choose a point $x$ uniformly at random, then a random $\sigma \in \Ff_q$ with weight proportional to $\C_{x, \sigma}$, a uniformly at random $C_1 \in \C_{x,\sigma}$, and finally a uniformly at random $C_2$ that intersects $C_1$ in a plane $P \ni x$. The point $y$ can then be chosen at random from $P$ and $\tau = T(C_2)(y)$. Viewing $\D$ this way we see that
\begin{equation*}
    \Pr_{\D}\left[(C_1, C_2) \in S, \; (x, \sigma) \notin X \right]  \leq   
    \E_{x, \sigma}\left[\mathbbm{1}_{(x, \sigma) \notin X} (1-\Phi_{x}(C_{x, \sigma}))\right],
\end{equation*}
since after $x$ and $\sigma$ are fixed, $(C_1, C_2) \in S$ only if $T(C_1)$ and $T(C_2)$ agree on $x$ and $C_2 \in C_{x,\sigma}$. This probability, with $C_{x, \sigma}$ fixed, is precisely $1 - \Phi_x(C_{x,\sigma})$.

If $(x,\sigma) \notin X$, then either $\mu(C_{x, \sigma}) < \epsilon/5$, or $p_{x, \sigma} + \frac{\gamma}{q^{n-2}-2} \geq \gamma \mu(C_{x,\sigma})$ by Lemma~\ref{lm: pxsig}. Therefore,

\begin{align*}
    \E_{x, \sigma}\left[\mathbbm{1}_{(x, \sigma) \notin X} (1-\Phi_{x}(C_{x, \sigma}))\right] \leq
    &\; \E_{x, \sigma}\left[\mathbbm{1}_{\mu_x(C_{x, \sigma})< \epsilon/5} (1-\Phi_{x}(C_{x, \sigma}))\right] \\
    &+ \E_{x, \sigma}\left[\mathbbm{1}_{p_{x, \sigma} > \gamma \left(1-\Phi_{x}(\C_{x,\sigma})\right)}(1-\Phi_{x}(C_{x, \sigma}))\right].
\end{align*}
We bound each term on the right hand side separately. For the first term, applying Lemma~\ref{lm: expansion} yields
\begin{equation*}
    \E_{x, \sigma}\left[\mathbbm{1}_{\mu_x(C_{x, \sigma})< \epsilon/5} (1-\Phi_{x}(C_{x, \sigma}))\right] \leq \frac{\epsilon}{5} + \frac{1}{q}.
\end{equation*}

For the second term, we have
\begin{align*}
    \E_{x, \sigma}\left[\mathbbm{1}_{p_{x, \sigma} > \gamma \left(1-\Phi_{x}(\C_{x,\sigma})\right)}(1-\Phi_{x}(C_{x, \sigma}))\right] 
     &\leq \E_{x, \sigma}\left[\frac{p_{x,\sigma}}{\gamma}\right] \\
     &\leq \frac{1000d^4}{q}.
\end{align*}
The last inequality is due to the fact that $\E_{x, \sigma}\left[p_{x,\sigma} \right]$ can be bounded by the probability that two degree at most $d$ entries $T(C_1)$ and $T(C_2)$ disagree on a line $\ell \subseteq C_1 \cap C_2$, but agree on a point $x \in \ell$. By the Schwartz-Zippel lemma, this probability is at most $d/q$.

Putting everything together gives:
\begin{equation*}
    \Pr_{\D}\left[(C_1, C_2) \in S, \; (x, \sigma) \notin X \right]\leq \frac{\epsilon}{5} + \frac{1000d^4}{q}  \leq \frac{\epsilon}{4}.
\end{equation*}
\end{proof}
As $(x, \sigma)$ and $(y, \tau)$ have the same marginal distribution, we get  
$\Pr_{\D}\left[(C_1, C_2) \in S, (y, \tau) \notin X\right] \leq \epsilon/4$, and it follows by the union bound that
\begin{equation}\label{eq:pass fxsig1}
\Pr_{\D}\left[(C_1, C_2) \in S \land ((x, \sigma) \notin X\lor (y, \tau) \notin X) \right]
\leq \frac{\eps}{2}.
\end{equation}
Next, we bound the contribution from $C_1\in\mathcal{D}_{x,\sigma}$.
\begin{lemma} \label{lm: pass dxsig}
\begin{equation*}
    \Pr_{\D}\left[(C_1, C_2) \in S, \; (x, \sigma) \in X, \; C_1 \in \mathcal{D}_{x, \sigma}\right] \leq 7\gamma \epsilon
\end{equation*}
\end{lemma}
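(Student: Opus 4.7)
The plan is to condition on the triple $(x, \sigma, C_1)$ with $(x, \sigma) \in X$ and $C_1 \in \mathcal{D}_{x, \sigma}$, and to show that the conditional probability of passing the test is small. By Lemma~\ref{lm: gxsigma}, $g_{x,\sigma}(x) = \sigma$, so for any such $C_1$ the polynomial $p := T(C_1) - g_{x,\sigma}|_{C_1}$ is a nonzero polynomial of degree at most $d$ on $C_1$ that vanishes at $x$. Under $\D$, conditional on $(x, C_1)$, the cube $C_2$ is generated by choosing a uniform plane $P \ni x$ inside $C_1$ and then a uniform cube $C_2 \supseteq P$ with $C_2 \ne C_1$; equivalently, $C_2$ is a uniform neighbor of $C_1$ in the Grassmann graph $G_x \cong \G(3,2)$ on $\C_x$.

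Since passing forces $T(C_1)(x) = T(C_2)(x) = \sigma$, I split into two subcases. If $C_2 \in \mathcal{F}_{x,\sigma}$, then $T(C_2)|_P = g_{x,\sigma}|_P$, so $(C_1, C_2) \in S$ forces $p|_P \equiv 0$. After translating so that $x$ is the origin, vanishing of $p$ on the $2$-dimensional linear subspace $W$ corresponding to $P$ is (since $\deg p \le d < q$) equivalent to the linear form defining $W$ dividing $p$ in $\Ff_q[y_1, y_2, y_3]$. As $p$ has at most $d$ distinct linear factors up to scalar, at most $d$ of the $q^2 + q + 1$ planes through $x$ in $C_1$ yield a passing edge, contributing at most $d/q^2$ to the conditional pass probability. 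If instead $C_2 \in \mathcal{D}_{x,\sigma}$, I bound the conditional pass probability trivially by $\Pr[C_2 \in \mathcal{D}_{x,\sigma} \mid C_1 \in \mathcal{D}_{x,\sigma}]$; averaging over $C_1 \in \mathcal{D}_{x,\sigma}$ and applying Lemma~\ref{lm: expansion} to $G_x$ gives $1 - \Phi_x(\mathcal{D}_{x,\sigma}) \le \mu_x(\mathcal{D}_{x,\sigma}) + 1/q$.

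Combining these, using the closing estimate $\mu_x(\mathcal{D}_{x,\sigma}) \le \sqrt{5\gamma}\, \mu_x(\C_{x,\sigma})$ from the end of Section~\ref{sec: local}, and summing over $\sigma$ and averaging over $x$ gives an upper bound of the form
\[
\sqrt{5\gamma}\left(\frac{d}{q^2} + \frac{1}{q}\right) + 5\gamma \cdot \E_x\!\left[\sum_{\sigma \in \Ff_q} \mu_x(\C_{x,\sigma})^2\right].
\]
The main obstacle is controlling the second term, since $\sum_\sigma \mu_x(\C_{x,\sigma})^2$ has no a priori connection to the pass probability of the plane-intersection test. The key observation is that this expectation is exactly $\alpha_{303}(T)$: the pass probability of the point-intersection variant of the cube test, in which one samples $x$ and two independent cubes $C_1, C_2 \in \C_x$ and checks $T(C_1)(x) = T(C_2)(x)$. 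Theorem~\ref{th: related tests} applied with $k = 3, r = 0, r' = 2$ then yields $\alpha_{303}(T) \le (1 + o(1))\alpha_{323}(T) = (1 + o(1))\eps$. Plugging this in and using the hypothesis $\eps \ge 10^7 d^6/q$ to absorb the $\sqrt{5\gamma}(d/q^2 + 1/q)$ term into $\gamma\eps$ yields the desired bound of $7\gamma\eps$.
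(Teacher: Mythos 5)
Your proof is correct, and the overall architecture (condition on $(x,\sigma)\in X$ and $C_1 \in \mathcal{D}_{x,\sigma}$, split the companion cube $C_2$ by whether it lies in $\mathcal{F}_{x,\sigma}$ or $\mathcal{D}_{x,\sigma}$, use Schwartz--Zippel-type counting for the first case and expansion for the second) matches the paper up to the final step. The paper gets $d/q$ for the first case by a generic Schwartz--Zippel argument, while your factorization argument (a plane through $x$ on which $p = T(C_1)-g_{x,\sigma}|_{C_1}$ vanishes corresponds to a linear factor of $p$, of which there are at most $d$ among the $q^2+q+1$ planes) gives the tighter $d/q^2$; either suffices.

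Where you genuinely diverge from the paper is in bounding $\E_x\bigl[\sum_\sigma \mu_x(\mathcal{D}_{x,\sigma})^2\bigr]$. The paper bounds $\mu_x(\mathcal{D}_{x,\sigma})^2 \leq 5\gamma\,\mu_x(\mathcal{F}_{x,\sigma})^2$, then uses Lemma~\ref{lm: expansion} to replace $\mu_x(\mathcal{F}_{x,\sigma})$ by $1-\Phi_x(\mathcal{F}_{x,\sigma})$ up to error, and observes that $\E_x\bigl[\sum_\sigma \mu_x(\mathcal{F}_{x,\sigma})(1-\Phi_x(\mathcal{F}_{x,\sigma}))\bigr]$ is precisely the probability that $C_1, C_2 \in \mathcal{F}_{x,\sigma}$ (and hence the test passes), so it is at most $\epsilon$. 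You instead bound by $5\gamma\,\mu_x(\C_{x,\sigma})^2$ and recognize $\E_x\bigl[\sum_\sigma \mu_x(\C_{x,\sigma})^2\bigr]$ as exactly $\alpha_{303}(T)$, then invoke the lower bound $(1-o(1))\alpha_{303}\le\alpha_{323}=\epsilon$ from Theorem~\ref{th: related tests}. This is a valid and conceptually cleaner identification — it directly names the quantity as the pass probability of the point-intersection variant — and you correctly use the direction of Theorem~\ref{th: related tests} that \emph{is} available (upper bounding $\alpha_{303}$ by $\alpha_{323}$, not the reverse, which the paper explicitly notes is impossible). The trade-off is that you import the unquantified $o(1)$ of Theorem~\ref{th: related tests} into the constant bookkeeping, whereas the paper's spectral argument is self-contained with explicit error terms like $\frac{1}{q^{n-2}-2}$; to make your route airtight one should note that the $o(1)$ is small enough (say $\leq 1/5$) under the standing assumptions on $q$, so that $5\gamma(1+o(1))\epsilon \leq 6\gamma\epsilon$.
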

\begin{proof}
First we rewrite the probability as,
\begin{align*}
     \Pr_{\D}\left[(C_1, C_2) \in S, \; (x, \sigma) \in X, \; x \in \mathcal{D}_{x, \sigma}\right]  
     &= \E_{x}\left[ \sum_{(x,\sigma) \in X}  \mu(\mathcal{D}_{x, \sigma}) \Pr_{\D}\left[ (C_1,C_2) \in S \; | \; C_1 \in \mathcal{D}_{x, \sigma}\right]\right].
\end{align*}
We fix $(x,\sigma)\in X$ and upper bound $\Pr_{\D}\left[(C_1,C_2) \in S \; | \;C_1 \in \mathcal{D}_{x, \sigma}\right]$.

Conditioned on $C_1 \in \mathcal{D}_{x,\sigma}$, the distribution $\D$ chooses $C_1 \in \mathcal{D}_{x,\sigma}$ uniformly at random, a plane $P \subseteq C_1$ 
containing $x$, and $C_2$ containing $P$.
For any $C_1 \in \mathcal{D}_{x, \sigma}$, we have by definition that $T(C_1) \neq g_{x, \sigma}|_{C_1}$, so by the Schwartz-Zippel Lemma we have 
$T(C_1)|_{P} = g_{x, \sigma}|_{P}$ with probability at most $d/q$. In the case that $T(C_1)|_{P} \neq g_{x, \sigma}|_{P}$, we immediately get that 
$T(C_1)|_{P} \neq T(C_2)|_{P}$ for any $C_2 \in \mathcal{F}_{x, \sigma}$ that contains $P$ (as $T(C_2) = g_{x,\sigma}|_{C_2}$). 
Furthermore, $T(C_1)|_{P} \neq T(C_2)|_{P}$ for all $C_2 \notin \C_{x, \sigma}$ as $p \ni x$. 
Thus, if the plane $P$ is chosen so that $T(C_1)|_{P} \neq g_{x, \sigma}|_{P}$, then the test can pass only if $C_2 \in \mathcal{D}_{x, \sigma}$. 
Averaged over all $C_1 \in D_{x, \sigma}$ the probability of picking such a plane $p \subseteq C_1$ and then $C_2 \in \mathcal{D}_{x, \sigma}$ is at most $1-\Phi_x(D_{x,\sigma}$). Therefore,
\begin{align*}
    \Pr_{\D}\left[ (C_1,C_2) \in S \; | \;C_1 \in \mathcal{D}_{x, \sigma}\right] \leq  \frac{d}{q}+
    1\cdot(1-\Phi_x(\mathcal{D}_{x,\sigma})) \leq \frac{d+1}{q} + \mu(\mathcal{D}_{x,\sigma}),
\end{align*}
where we use Lemma~\ref{lm: expansion}. Thus,
\begin{align*}
    \E_x \left[\sum_{(x, \sigma) \in X} \mu(\mathcal{D}_{x, \sigma}) \Pr_{\D}\left[(C_1,C_2) \in S \; | \;C_1 \in \mathcal{D}_{x, \sigma}\right] \right] 
    &\leq \E_x\left[ \sum_{(x,\sigma) \in X} \mu(\mathcal{D}_{x,\sigma})^2 \right] + \frac{d+1}{q} \\
    &\leq  \E_x\left[ \sum_{(x,\sigma) \in X} 5\gamma\mu(\mathcal{F}_{x,\sigma})^2 \right] + \frac{d+1}{q}.
\end{align*}
Finally notice that \[
\E_x \left[ 5\gamma\mu(\mathcal{F}_{x,\sigma})^2 \right] \leq 5\gamma \left(\E_x \left[ \mu(\mathcal{F}_{x,\sigma})(1-\Phi_x(\mathcal{F}_{x,\sigma})\right] + \frac{1}{q^{n-2}-2} \right) \leq 5\gamma\left(\epsilon+ \frac{1}{q^{n-2}-2}\right) \leq 6\gamma \epsilon,
\]
where we use the lower bound of Lemma~\ref{lm: expansion} with $k=3$ for the first inequality. The second inequality holds since $\mu(\mathcal{F}_{x,\sigma})(1-\Phi_x(\mathcal{F}_{x,\sigma}))$ is exactly the probability that $C_1$ and $C_2$ are both in $\mathcal{F}_{x,\sigma}$,
in which case the test passes. Hence, when this quantity is summed over all $\sigma$ such that $(x, \sigma) \in X$ and averaged over all $x$, 
the value obtained can be at most the overall pass probability of the test.
\end{proof}

\begin{lemma} \label{lm: D symmetry}
Let $(x, \sigma, C_1, y, \tau, C_2)$ be a tuple chosen from $\D$ with $(C_1, C_2) \in S$. Then, 
\begin{equation*}
    \D(x, \sigma, C_1, y, \tau, C_2) =  \D(y, \tau, C_2, x, \sigma, C_1) = \D(x, \sigma, C_2, y, \tau, C_1) = \D(y, \tau, C_1, x, \sigma, C_2).
\end{equation*}
\end{lemma}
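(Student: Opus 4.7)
The plan is to unroll the six-step sampling procedure defining $\D$ into a closed-form density, and then observe that, when restricted to tuples with $(C_1,C_2)\in S$, this density is manifestly invariant under each of the three swaps in the statement.

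First I would show that steps 1--6 collapse to the formula
\begin{equation*}
\D(x,\sigma,C_1,y,\tau,C_2)=\frac{1}{q^n}\cdot\frac{1}{|\C_x|}\cdot\frac{1}{q^3-1}\cdot\frac{1}{N(C_1,x,y)},
\end{equation*}
where $N(C_1,x,y)=|\{C\in\C_{x,y}\;|\;\dim(C\cap C_1)=2\}|$, and the support consists of tuples with $C_1\in\C_x$, $y\in C_1\setminus\{x\}$, $\sigma=T(C_1)(x)$, $C_2\in\C_{x,y}$ satisfying $\dim(C_1\cap C_2)=2$, and $\tau=T(C_2)(y)$. The collapse of step~2 and step~3 gives the factor $1/|\C_x|$ because the $|\C_{x,\sigma}|$ in the numerator of step~2 cancels with the denominator of step~3; the same bookkeeping (equivalently, generating $(\tau,C_2)$ jointly by picking $C_2$ uniformly in the set of cubes of the required intersection and reading off $\tau=T(C_2)(y)$) yields the final factor.

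Second, I would check that when $(C_1,C_2)\in S$ all four tuples on the right of the lemma lie in the support of $\D$. Indeed $(C_1,C_2)\in S$ means $T(C_1)|_{P}=T(C_2)|_{P}$ on the plane $P=C_1\cap C_2$; since $P\ni x,y$, this forces $T(C_2)(x)=\sigma$ and $T(C_1)(y)=\tau$. Hence swapping the two cubes, swapping the two labeled points, or doing both at once, produces a tuple still meeting every support condition.

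Finally, I would match the densities using two purely combinatorial facts. The factors $1/q^n$ and $1/(q^3-1)$ are obviously unchanged by any swap. The equality $|\C_x|=|\C_y|$ is immediate since the number of affine cubes through a point depends only on $n$ and $q$, which handles any interchange of the roles of $x$ and $y$. The equality $N(C_1,x,y)=N(C_2,x,y)$ follows from vertex-transitivity of the graph on $\C_{x,y}$ whose edges are pairs of cubes meeting in a plane: after quotienting by the line $\ell_{x,y}$ common to every element of $\C_{x,y}$, this graph is isomorphic to $\G(2,1)$, which is vertex-transitive and in particular regular. Substituting these equalities into the closed-form density simultaneously yields all three equalities in the lemma. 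I expect the only non-bookkeeping step to be the regularity/vertex-transitivity observation; beyond that, the proof is essentially a one-line comparison of densities.
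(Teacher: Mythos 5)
Your proof is correct, but it takes a different route from the paper's. The paper re-generates $\D$ in the opposite order: it first samples $(C_1,C_2)$ uniformly among pairs of cubes intersecting in a plane (a fact stated just before the lemma), then samples $(x,y)$ uniformly among ordered distinct pairs of points in $C_1\cap C_2$, with $\sigma,\tau$ determined by $T(C_1)(x)$, $T(C_2)(y)$. In that re-ordered generation, swapping $C_1\leftrightarrow C_2$ and/or $x\leftrightarrow y$ is manifestly measure-preserving, and the event $(C_1,C_2)\in S$ pins down $\sigma,\tau$ consistently under the swaps. You instead unroll the six-step process into an explicit density $\frac{1}{q^n}\cdot\frac{1}{|\C_x|}\cdot\frac{1}{q^3-1}\cdot\frac{1}{N(C_1,x,y)}$ on its support, observe that $(C_1,C_2)\in S$ puts all four swapped tuples in the support, and reduce the claim to the two regularity facts $|\C_x|=|\C_y|$ and $N(C_1,x,y)=N(C_2,x,y)$, the latter via the isomorphism of the incidence graph on $\C_{x,y}$ with $\G(2,1)$. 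Both arguments are sound; the paper's is shorter because it offloads all the bookkeeping onto the (already established) exchangeability of the sampling order, whereas yours is more self-contained and makes the combinatorial cancellations explicit, at the modest cost of having to invoke regularity of $\G(2,1)$. One small point worth spelling out in your write-up is the support verification under the cube swap: you need $y\in C_2$ (and $x\in C_2$) for $\D(x,\sigma,C_2,y,\tau,C_1)$ to be in the support, which holds because $x,y\in C_1\cap C_2$ by the definition of $\D$, not merely because $(C_1,C_2)\in S$.
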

\begin{proof}
Since $(C_1, C_2) \in S$, the table entries $T(C_1)$ and $T(C_2)$ must agree on $x$ and $y$. Thus, $C_1, C_2 \in \mathcal{C}_{x,\sigma}$ and $C_1, C_2 \in \mathcal{C}_{y,\tau}$. Now suppose $\D$ is generated by first choosing cubes $C_1, C_2$ uniformly at random conditioned on $\dim(C_1 \cap C_2) = 2$. Clearly the probability of choosing $(C_1, C_2)$ or $(C_2, C_1)$ is equal. Once the plane $C_1 \cap C_2$ is fixed, the pair $(x,y)$ is chosen uniformly at random over distinct ordered pairs of points in $C_1 \cap C_2$, while $\sigma$ and $\tau$ are set to $T(C_1)(x)$ and $T(C_2)(y)$ respectively. Again, it is clear that the pairs $(x,y)$ and $(y,x)$ have the same probability of being chosen. The result of the lemma then follows.
\end{proof}

By Lemma~\ref{lm: D symmetry}, the bound in Lemma~\ref{lm: pass dxsig} still holds when either (or both) of $(x, \sigma)$ or $C_1$ are replaced by $(y, \tau)$ or $C_2$ respectively. Thus, letting $E$ be the event 
\begin{equation*}
    E = (C_1 \in \mathcal{D}_{x, \sigma}) \lor (C_2 \in \mathcal{D}_{x, \sigma})  \lor (C_1 \in \mathcal{D}_{y, \tau}) \lor (C_2 \in \mathcal{D}_{y, \tau}),
\end{equation*}
it follows that
\begin{equation}\label{eq:pass fxsig2}
\Pr_{\D}\left[(C_1, C_2) \in S \land (x, \sigma) \in X \land (y,\tau)\in X \land E\right] 
\leq 28\gamma \epsilon
\leq \frac{\eps}{4}.
\end{equation}
We are now ready to prove Lemma~\ref{lm: pass fxsig}.
\begin{proof}[Proof of Lemma~\ref{lm: pass fxsig}]
Let $H$ denote the event that both $(x, \sigma)$ and $(y, \tau)$ are in $X$. By~\eqref{eq:pass fxsig1} and~\eqref{eq:prune1} we get that 
\[
\eps
\leq \Pr_{\D}\left[(C_1, C_2) \in S\right]
= \Pr_{\D}\left[(C_1, C_2) \in S\land H\right] + \Pr_{\D}\left[(C_1, C_2) \in S\land \bar{H}\right]
\leq \Pr_{\D}\left[(C_1, C_2) \in S\land H\right]+\frac{\eps}{2},
\]
so $\Pr_{\D}\left[(C_1, C_2) \in S\land H\right]\geq \frac{\eps}{2}$. 
If $H$ holds, then we can write $\C_{x,\sigma} = \mathcal{F}_{x,\sigma}\sqcup \mathcal{D}_{x,\sigma}$ and $\C_{y, \tau} = \mathcal{F}_{y,\tau} \sqcup \mathcal{D}_{y,\tau}$. Let $H'$ be the event that $C_1, C_2 \in \mathcal{F}_{x,\sigma} \cap \mathcal{F}_{y,\tau}$, then
\[
\frac{\eps}{2}
\leq \Pr_{\D}\left[(C_1, C_2) \in S\land H\right]
= 
\Pr_{\D}\left[(C_1, C_2) \in S\land H\land H'\right]
+
\Pr_{\D}\left[(C_1, C_2) \in S\land H\land E\right],
\]
and by~\eqref{eq:pass fxsig2} we have $\Pr_{\D}\left[(C_1, C_2) \in S\land H\land E\right]\leq \frac{\eps}{4}$ and so 
$\Pr_{\D}\left[(C_1, C_2) \in S\land H\land H'\right]\geq \frac{\eps}{4}$.
\end{proof}
\subsection{Bumps in measure are negligible} \label{sec: bumps}
Throughout this section, let $\mathcal{F}_x$ be an arbitrary set of cubes containing the point $x$ and suppose $\mu_x(\mathcal{F}_x) = \eta$ (in our setting, $\mathcal{F}_x$ will be 
$\mathcal{F}_{x,\sigma}$ for some $(x,\sigma)\in X$). For $c > 1$, let $Y_c = \{ y \in \mathbb{F}_{q}^{\num} \; | \; \mu_{x,y}(\mathcal{F}_x \cap \C_y) \geq c\eta \}$, 
i.e. the set of points $y$ such that $\mathcal{F}_x$ is significantly denser in $\C_{x,y}$. The goal of this section is to show that the mass of $Y_c$ is small, namely:
\begin{lemma}\label{lm: bump up}
Let $Y_c = \{ y \in \Ff_q^n \setminus{x} \; | \; \mu_{x,y}(\mathcal{F}_x \cap \C_y) > c\eta \}$, for $c > 1$. Then, $\mu(Y_c) \leq \frac{4}{(c-1)^2\mu_x(\mathcal{F}_x) q^2}$.
\end{lemma}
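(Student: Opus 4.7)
Define $f:\Ff_q^n\setminus\{x\}\to[0,1]$ by $f(y)=\mu_{x,y}(\mathcal{F}_x\cap\C_y)$, so $Y_c=\{y:f(y)>c\eta\}$. My plan is to control $\mu(Y_c)$ via a second moment / Chebyshev argument, namely
\[
\mu(Y_c)\;=\;\Pr_y\!\left[f(y)-\eta>(c-1)\eta\right]\;\leq\;\frac{\E_y[(f(y)-\eta)^2]}{(c-1)^2\eta^2},
\]
so the task reduces to proving $\E_y[(f(y)-\eta)^2]\leq 4\eta/q^2$. This variance bound is exactly what the spectral gap of the bipartite inclusion graph $G_3=G(\Ff_q^n\setminus\{x\},\C_x)$ from Lemma~\ref{lm: singular value}(3) provides.

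The first step is to recognize $f$ as the image of $\mathbf{1}_{\mathcal{F}_x}$ under the edge-averaging operator $M$ of $G_3$: indeed, since the neighborhood of $y$ in $\C_x$ is exactly $\C_{x,y}$, we have $f(y)=(M\mathbf{1}_{\mathcal{F}_x})(y)=\E_{C\in\C_{x,y}}[\mathbf{1}_{\mathcal{F}_x}(C)]$. Biregularity then gives $\E_y[f(y)]=\E_{C\in\C_x}[\mathbf{1}_{\mathcal{F}_x}(C)]=\eta$, as expected.

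Next, I would decompose $\mathbf{1}_{\mathcal{F}_x}=\eta\cdot\mathbf{1}+g$ with $g\perp\mathbf{1}$ in $L^2(\C_x,\mu_x)$; then $\|g\|_2^2=\eta-\eta^2\leq\eta$. Since $M$ fixes constants and contracts the orthogonal complement by at most $\lambda(G_3)$, and since Lemma~\ref{lm: singular value}(3) gives $\lambda(G_3)\leq 2/q$ for $q$ large, we obtain $f-\eta=Mg$ and hence
\[
\E_y[(f(y)-\eta)^2]\;=\;\|Mg\|_2^2\;\leq\;\lambda(G_3)^2\,\|g\|_2^2\;\leq\;\frac{4\eta}{q^2}.
\]
Plugging into the Chebyshev inequality above yields $\mu(Y_c)\leq\frac{4}{(c-1)^2\,\mu_x(\mathcal{F}_x)\,q^2}$, as desired.

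There is no substantial obstacle here; the argument is a standard application of the spectral gap for bipartite inclusion graphs together with Chebyshev. The only minor points of care are treating $\lambda(G_3)$ as genuinely $\leq 2/q$ rather than $\approx 1/q$ (absorbing the $(1+o(1))$ factor into the constant $4$), and noting that discarding the single point $x$ from the sampling space affects $\E_y$ and $\mu(Y_c)$ only by a negligible $O(q^{-n})$ additive error.
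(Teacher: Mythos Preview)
Your argument is correct. Both proofs are second-moment bounds on the same bipartite inclusion graph $G_3=G(\Ff_q^n\setminus\{x\},\C_x)$, but you take a genuinely different route from the paper. You work on the ``points'' side: you observe that $f(y)=\mu_{x,y}(\mathcal{F}_x\cap\C_y)$ is exactly $M\mathbf{1}_{\mathcal{F}_x}$ for the averaging operator $M$ of $G_3$, invoke the already-stated spectral gap $\lambda(G_3)\approx 1/q$ from Lemma~\ref{lm: singular value}(3) to bound $\|f-\eta\|_2^2\leq \lambda(G_3)^2\,\eta\leq 4\eta/q^2$, and finish with Chebyshev. The paper instead works on the ``cubes'' side: it defines the measure $\nu_x$ on $\C_x$ obtained by sampling $y\in Y_c$ and then $C\in\C_{x,y}$, computes $\var[z(C)]$ for $z(C)=|C\cap Y_c|$ by a direct case analysis on whether two points of $Y_c$ are collinear with $x$ (this is Lemma~\ref{lem:var_analysis}), and then bounds $|\mu_x(\mathcal{F}_x)-\nu_x(\mathcal{F}_x)|$ via Cauchy--Schwarz. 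Your approach is shorter and more modular, reusing the singular-value lemma rather than redoing the underlying variance calculation; the paper's approach is more self-contained and does not depend on the exact constant hidden in the $\approx$ of Lemma~\ref{lm: singular value}. One small cleanup: since $x\notin Y_c$ by definition, the passage from $\Pr_{y\in\Ff_q^n\setminus\{x\}}[y\in Y_c]$ to $\mu(Y_c)$ is an inequality in the right direction, so no additive error term is needed at all.
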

Towards the proof of Lemma~\ref{lm: bump up}, consider the distribution $\nu_x$ over $\C_x$ generated by choosing a random $y \in Y_c$ and then a random cube $C$ 
containing $x$ and $y$. Note that $\nu_x(C) = \frac{z(C)}{|Y_c||\C_{x,y}|}$, where $z(C) = |C \cap Y_c|$.

\begin{lemma}\label{lem:var_analysis}
We have:
\begin{enumerate}
  \item $\E_x[z(C)] = \mu(Y_c)(q^3-1)$.
  \item $\var[z(C)] \leq 2\mu(Y_c)q^4$.
\end{enumerate}
\end{lemma}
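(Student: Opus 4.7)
\textbf{Proof plan for Lemma~\ref{lem:var_analysis}.} Both statements are direct second-moment computations exploiting the high degree of symmetry of the distribution $\mu_x$ on cubes through $x$. The only genuinely interesting point is that the dominant contribution to the variance does not come from the ``diagonal'' term $\E_x[z(C)]$, but rather from collinear pairs $(y,y')$; this requires splitting the second-moment expansion into three cases rather than the naive two.

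For the mean, I would apply linearity of expectation: $\E_x[z(C)]=\sum_{y\in Y_c}\Pr_{C\sim\mu_x}[y\in C]$. For each $y\neq x$, the probability $\Pr_{C\sim\mu_x}[y\in C]=|\C_{x,y}|/|\C_x|$, and a standard Gaussian binomial count (or the averaging identity $\E_{y'\neq x}\Pr_{C}[y'\in C]=\E_{C}[|C\setminus\{x\}|]/|\Ff_q^n\setminus\{x\}|$) shows this ratio equals $(q^3-1)/(q^n-1)$ for every $y\neq x$. Summing over $y\in Y_c$ yields the claimed expression (up to the paper's standing identification of $\mu$ on points with the uniform measure on $\Ff_q^n\setminus\{x\}$).

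For the variance, I would expand $\E_x[z(C)^2]=\sum_{y,y'\in Y_c}\Pr_{C\sim\mu_x}[y,y'\in C]$ and split into three cases. The diagonal case $y=y'$ contributes exactly $\E_x[z(C)]\le\mu(Y_c)q^3$. If $y\neq y'$ and $x,y,y'$ are collinear, then $y\in C$ forces $\ell_{xy}\subseteq C$ and hence $y'\in C$, so $\Pr_C[y,y'\in C]=(q^3-1)/(q^n-1)$; each $y\in Y_c$ has at most $q-2$ such partners on $\ell_{xy}$, giving a total of at most $(q-2)|Y_c|(q^3-1)/(q^n-1)\le\mu(Y_c)q^4$. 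If $y\neq y'$ and $x,y,y'$ are affinely independent, a Gaussian binomial count yields
\[
\Pr_C[y,y'\in C]=\frac{(q^3-1)(q^2-1)}{(q^n-1)(q^{n-1}-1)}\le\frac{(q^3-1)^2}{(q^n-1)^2},
\]
the key inequality being equivalent after clearing denominators to $(q-1)(q^{n-1}-q^2)\ge 0$. Summing this upper bound over \emph{all} ordered pairs $y\neq y'$ (so as to absorb the negligible overcount on collinear pairs) gives at most $|Y_c|^2\bigl((q^3-1)/(q^n-1)\bigr)^2=\E_x[z(C)]^2$, which is cancelled by the subtraction in the variance.

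Combining the three cases yields $\var[z(C)]\le\mu(Y_c)(q^3-1)+\mu(Y_c)q^4\le 2\mu(Y_c)q^4$, as claimed. There is no real obstacle in the argument: the only point requiring attention is recognizing that the collinear contribution, though coming from only $O(|Y_c|q)$ pairs, gives a probability of order $1/q^{n-3}$ per pair and thus dominates, whereas the far more numerous affinely independent pairs are actually (slightly) negatively correlated and contribute nothing to the variance.
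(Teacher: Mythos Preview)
Your proposal is correct and follows essentially the same route as the paper: expand the second moment over pairs $(y,y')\in Y_c^2$, separate the pairs with $x,y,y'$ collinear (which the paper treats as a single case including the diagonal, giving at most $|Y_c|q$ pairs each with probability $\approx(q^3-1)/q^n$) from the affinely independent pairs, and observe that the latter contribute essentially $\E_x[z(C)]^2$ and so are cancelled in the variance. Your treatment of the affinely independent term via the clean inequality $\frac{(q^3-1)(q^2-1)}{(q^n-1)(q^{n-1}-1)}\le\frac{(q^3-1)^2}{(q^n-1)^2}$ is slightly tidier than the paper's, which bounds that contribution by $\mu(Y_c)^2(q^3-1)^2\cdot\frac{q^n}{q^n-q}$ and then carries a small residual term $\mu(Y_c)^2(q^3-1)^2\frac{q}{q^n-q}$ after subtracting the squared mean; but this is a cosmetic difference only.
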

\begin{proof}
The first item is clear using linearity of expectation, as the probability that any fixed point $y \neq x$ from $Y_c$ is contained in a random cube containing $x$ is $\frac{q^3-1}{|\mathbb{F}_q^n|}$.

For the second item, we write
\begin{equation*}
    \E_{C}[z(C)^2] = \sum_{y_1,y_2 \in Y_c} \E_{C} [\ind_{y_1\in C} \ind_{y_2\in C}].
\end{equation*}
If $x, y_1, y_2$ are contained in a line, then the expectation in the sum is $|Y_c|\frac{q^3-1}{q^n}$. This can happen for at most $|Y_c|q$ pairs $y_1, y_2$. Otherwise the expectation in the sum is $|Y_c|^2\frac{(q^3-1)(q^3-q)}{q^n(q^n-q)}$. Overall this yields
\begin{equation*}
     \E_{C}[z(C)^2] \leq |Y_c|q\frac{q^3-1}{q^n} + |Y_c|^2\frac{(q^3-1)(q^3-q)}{q^n(q^n-q)} \leq \mu(Y_c)q^4 + \mu(Y_c)^2(q^3-1)^2 \frac{q^n}{q^n-q}.
\end{equation*}
Since $\E_x[z(C)] = \mu(Y_c)(q^3-1)$, the second term on the right hand side essentially cancels out and we can crudely bound the variance by
\begin{equation*}
   \var[z(C)] =  \E_x[z(C)^2] - (\mu(Y_c)(q^3-1))^2 \leq \mu(Y_c)q^4 + \mu(Y_c)^2(q^3-1)^2 \frac{q}{q^n-q}
   \leq 2\mu(Y_c)q^4. \qedhere
\end{equation*}
\end{proof}

\begin{proof}[Proof of Lemma~\ref{lm: bump up}]
We have
\begin{equation*}
    |\mu_x(\mathcal{F}_x) - \nu_x(\mathcal{F}_x)|^2 \leq \left(\sum_{C \in \mathcal{F}_x}|\mu_x(C) -\nu_x(C)|\right)^2 \leq |\mathcal{F}_x| \sum_{C \in \mathcal{F}_x} (\mu_x(C) -\nu_x(C))^2,
\end{equation*}
where the first inequality is due to the triangle inequality, and the second inequality is Cauchy-Schwartz. Fix some $y \neq x$, and recall that $\nu_x(C) = \frac{z(C)}{|Y_c||\C_{x,y}|}$
and $\mu_x(C) = \frac{\E_{C}[z(C)]}{|Y_c||\C_{x,y}|} = \frac{1}{|C_x|}$, so
\begin{equation*}
    (\mu_x(C) - \nu_x(C))^2 = \frac{(z(C) - \E[z(C)])^2}{|Y_c|^2|\C_{x,y}|^2}.
\end{equation*}
Therefore we may bound $\sum_{C \in \mathcal{F}_x} (\mu_x(C) -\nu_x(C))^2 \leq\sum_{C \in \C_x} (\mu_x(C) -\nu_x(C))^2 \leq \frac{\var[z(C)]|\C_x|}{|Y_c|^2|\C_{x,y}|^2}$, which with the previous inequality and Lemma~\ref{lem:var_analysis} implies that
\begin{equation*}
     |\mu_x(\mathcal{F}_x) - \nu_x(\mathcal{F}_x)|^2 \leq 2 \frac{|Y_c||\mathcal{F}_x||\C_x|q^4}{|Y_c|^2|\C_{x,y}|^2q^n} \leq \frac{4\eta q^{n} }{|Y_c|q^2} = \frac{4\eta}{\mu(Y_c)q^2}.
\end{equation*}
where we use the facts that $|\C_x|/|\C_{x,y}| \leq \frac{q^{3n} q^3}{q^{2n} q^3(q^3-1)}\leq \frac{q^n}{q^3-1}$ and $|\mathcal{F}_x| = \eta|\C_x|$. By definition, $\nu_x(\mathcal{F}_x) \geq c\eta$, so 
the left hand side is least $(c-1)^2\eta^2$, and combining this with the above inequality yields $\mu(Y_c) \leq \frac{4}{(c-1)^2\eta q^2}$.
\end{proof}
\subsection{Proof of Theorem~\ref{th: main}}
In this section we use the results of Sections~\ref{sec: prune} and~\ref{sec: bumps} to prove Theorem~\ref{th: main}. 
Namely, we find a degree $d$ polynomial that agrees with an $\Omega(\epsilon)$-fraction of the entries in $T$. 

We start by rewriting the probability in Lemma~\ref{lm: pass fxsig} as an expectation over random points $x$ and $y$. This can be seen as generating $\D$ by first choosing $x$ and $y$, and then choosing $(C_1, C_2)$ as two random cubes in $\C_{x,y}$ that intersect in a plane, and setting $\sigma = T(C_1)(x)$ 
and $\tau = T(C_2)(y)$. Lemma~\ref{lm: pass fxsig} then states
\begin{equation*}
   \Pr_{\D}\left[(C_1,C_2) \in S,\; C_1, C_2 \in \mathcal{F}_{x,\sigma} \cap \mathcal{F}_{y,\tau}\right]  = \E_{x,y}\left[ \sum_{(x,\sigma), (y,\tau) \in X} \mu_{x,y}\left( \Fxys \right)\left(1-\Phi_{x,y}\left(\Fxys\right)\right) \right] \geq \frac{\epsilon}{4}.
\end{equation*}
Applying Lemma~\ref{lm: expansion} we get
\begin{equation} \label{eqn}
    \E_{x,y}\left[ \sum_{(x,\sigma), (y,\tau) \in X} \mu_{x,y}\left( \Fxys \right)^2 \right] + \frac{1}{q} \geq \frac{\epsilon}{4}.
\end{equation}

At this point, we would like to restrict to $\Fxys$ such that $\mu_{x,y}\left(\Fxys\right)$ is not significantly larger than $\mu_x(\mathcal{F}_{x,\sigma})$ and $\mu_y(\mathcal{F}_{y,\tau})$. Let $H_{x,\sigma}(y, \tau)$ denote the indicator for the event that $\mu_{x,y}\left(\Fxys\right) > 2\mu_{x}(\mathcal{F}_{x,\sigma})$.
\begin{lemma}\label{lm: uneven F}
\begin{equation*}
      \E_{x,y}\left[ \sum_{(x,\sigma), (y,\tau) \in X} \mu_{x,y}\left( \Fxys \right)^2 (H_{x,\sigma}(y, \tau) + H_{y, \tau}(x, \sigma)) \right] \leq \frac{400\log(5/\eps)}{q^2}.
\end{equation*}
\end{lemma}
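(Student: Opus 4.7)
\textbf{Proof plan for Lemma~\ref{lm: uneven F}.} By the symmetry of the expression under swapping the roles of $(x,\sigma)$ and $(y,\tau)$, it suffices to bound the expectation of the term involving $H_{x,\sigma}(y,\tau)$ by $\frac{200 \log(5/\eps)}{q^2}$; the other term is handled identically. The plan is to first collapse the sum over $\tau$ to a single term using the degree $d$ polynomial $g_{x,\sigma}$, and then apply the bump lemma (Lemma~\ref{lm: bump up}) via a dyadic decomposition on the range of $\mu_{x,y}(\Fxys)$.

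The key observation is that for any $C \in \mathcal{F}_{x,\sigma}$ we have $T(C) = g_{x,\sigma}|_C$, so $T(C)(y) = g_{x,\sigma}(y)$ for every $y\in C$. Consequently, for fixed $x$, $\sigma$ and $y$, the set $\Fxys = \mathcal{F}_{x,\sigma}\cap \mathcal{F}_{y,\tau}$ is empty unless $\tau = g_{x,\sigma}(y)$. Therefore, writing $\eta_{x,\sigma} = \mu_x(\mathcal{F}_{x,\sigma})$, we get
\begin{equation*}
\sum_{(y,\tau)\in X} \mu_{x,y}(\Fxys)^2\, H_{x,\sigma}(y,\tau)
\;\leq\; \mu_{x,y}(\mathcal{F}_{x,\sigma}\cap \mathcal{C}_y)^2 \cdot \ind_{\mu_{x,y}(\mathcal{F}_{x,\sigma}\cap \mathcal{C}_y) > 2\eta_{x,\sigma}},
\end{equation*}
since $\Fxys \subseteq \mathcal{F}_{x,\sigma}\cap \mathcal{C}_y$, which is the event needed to apply Lemma~\ref{lm: bump up} (with $\mathcal{F}_x := \mathcal{F}_{x,\sigma}$).

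Next, I would dyadically decompose the range of $\mu_{x,y}(\mathcal{F}_{x,\sigma}\cap \mathcal{C}_y)$ when it exceeds $2\eta_{x,\sigma}$. For each integer $k\geq 1$, let $B_k = \{y : 2^k\eta_{x,\sigma} \leq \mu_{x,y}(\mathcal{F}_{x,\sigma}\cap \mathcal{C}_y) < 2^{k+1}\eta_{x,\sigma}\}$; Lemma~\ref{lm: bump up} applied with $c = 2^k$ gives $\mu(B_k) \leq \frac{4}{(2^k-1)^2 \eta_{x,\sigma}\, q^2}$. Since $\mu_{x,y}(\mathcal{F}_{x,\sigma}\cap\mathcal{C}_y)\leq 1$, only levels with $2^k\eta_{x,\sigma}\leq 1$ contribute, so $k$ ranges up to $K\leq \log_2(1/\eta_{x,\sigma})$. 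Summing,
\begin{equation*}
\E_y\!\left[\mu_{x,y}(\mathcal{F}_{x,\sigma}\cap \mathcal{C}_y)^2 \ind_{\mu_{x,y}(\mathcal{F}_{x,\sigma}\cap \mathcal{C}_y)>2\eta_{x,\sigma}}\right]
\leq \sum_{k=1}^{K} (2^{k+1}\eta_{x,\sigma})^2 \cdot \frac{4}{(2^k-1)^2 \eta_{x,\sigma}\, q^2}
= \frac{16\eta_{x,\sigma}}{q^2}\sum_{k=1}^{K}\frac{1}{(1-2^{-k})^2},
\end{equation*}
and the last sum is bounded by $O(K) = O(\log(1/\eta_{x,\sigma}))$.

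Finally, I would sum over excellent $\sigma$ and average over $x$. Because $(x,\sigma)\in X$ forces $\mu_x(\mathcal{C}_{x,\sigma})\geq \eps/5$, Lemma~\ref{lm: gxsigma} gives $\eta_{x,\sigma}\geq (1-2\sqrt{\gamma})\eps/5\geq \eps/10$, hence $\log(1/\eta_{x,\sigma})\leq \log(10/\eps)=O(\log(5/\eps))$. Moreover the partition structure gives $\sum_{\sigma}\eta_{x,\sigma}\leq 1$ for every $x$, so
\begin{equation*}
\E_x\!\sum_{(x,\sigma)\in X} \frac{16\eta_{x,\sigma}\,O(\log(1/\eta_{x,\sigma}))}{q^2}
\;\leq\; \frac{O(\log(5/\eps))}{q^2},
\end{equation*}
and doubling for the $H_{y,\tau}(x,\sigma)$ term yields the claimed bound $\frac{400\log(5/\eps)}{q^2}$, where the constant $400$ simply absorbs the geometric-series constant from $\sum_k (1-2^{-k})^{-2}$ together with the factor of $2$ from symmetry. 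The only substantive step is the dyadic partitioning combined with Lemma~\ref{lm: bump up}; the collapse of the $\tau$-sum and the bookkeeping on $\eta_{x,\sigma}$ are routine.
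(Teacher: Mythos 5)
Your proposal is correct and follows the same approach as the paper's proof: reduce to the $H_{x,\sigma}$ term by symmetry, collapse the sum over $\tau$ to $\tau = g_{x,\sigma}(y)$, dyadically decompose $\mu_{x,y}$ and apply Lemma~\ref{lm: bump up} level by level, then sum over $\sigma$ using $\sum_\sigma \eta_{x,\sigma} \leq 1$ and the lower bound $\eta_{x,\sigma}\geq\eps/10$ for excellent $(x,\sigma)$. The only cosmetic difference is that you bound $\mu_{x,y}(\Fxys)$ by $\mu_{x,y}(\mathcal{F}_{x,\sigma}\cap\mathcal{C}_y)$ before dyadic partitioning, which makes the application of Lemma~\ref{lm: bump up} slightly more transparent; the paper partitions directly on $\mu_{x,y}(\Fxys)$ and uses the same containment implicitly.
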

\begin{proof}
We bound
\begin{equation*}
     \E_{x,y}\left[ \sum_{(x,\sigma), (y,\tau) \in X} \mu_{x,y}\left( \Fxys \right)^2 H_{x,\sigma}(y, \tau) \right] = \E_x \left[\sum_{(x,\sigma)\in X}\E_y  \left[ \sum_{(y,\tau)\in X} \mu_{x,y}\left( \Fxys \right)^2 H_{x,\sigma}(y, \tau) \right]\right].
\end{equation*}
Fix an $(x,\sigma) \in X$ and denote $\eta = \mu_{x}(F_{x,\sigma})$. Let $A_y(j)$ denote the indicator of the event that $2^j\eta \leq \mu_{x,y}\left(\Fxy\right) < 2^{j+1} \eta$. Since all cubes in $F_{x,\sigma}$ agree with a function $g_{x,\sigma}$, for each $y$ there can only be one $\tau$, namely $\tau = g_{x,\sigma}(y)$, such that $\mu_{x,y}\left(\Fxys\right) > 0$. Thus, denoting $\Fxy = \Fxys$ for this $\tau$, for all $(x,\sigma) \in X$ we have:
\begin{equation*}
     \E_{y} \left[\sum_{(y,\tau)\in X} \mu_{x, y}\left(\Fxys\right)^2  H_{x,\sigma}(y, \tau) \right] \leq  \E_{y} \left[ \mu_{x, y}(\Fxy)^2  H_{x,\sigma}(y, \tau) \right]
\end{equation*}
We start by performing dyadic partitioning of the inner expectation with $(x,\sigma)$ fixed.
\begin{align*}
    \E_{y} \left[ \mu_{x, y}(\Fxy)^2  H_{x,\sigma}(y, \tau) \right]
     \leq  \sum_{j = 1}^{\log(1/\eta)} \E_y[A_y(j) \mu_{x,y}(\Fxy)^2] 
    \leq  \sum_{j = 1}^{\log(1/\eta)} \E_y[A_y(j) (2^{j+1}\eta)^2]. 
\end{align*}

Using Lemma~\ref{lm: bump up} gives
\begin{equation*}
    \E_y[A_y(j) (2^{j+1}\eta)^2] \leq (2^{j+1}\eta)^2 \mu(Y_{2^j}) \leq \frac{100\eta}{q^2}.
\end{equation*}
Combining everything, we get that
\[
 \E_{x,y}\left[ \sum_{(x,\sigma), (y,\tau) \in X} \mu_{x,y}\left( \Fxys \right)^2 H_{x,\sigma}(y, \tau) \right]
 \leq 
 \E_x \left[\sum_{(x,\sigma)\in X} \log(1/\eta) \frac{ 100\mu(\mathcal{F}_{x,\sigma})}{q^2}\right]
 \leq \frac{200\log(5/\eps)}{q^2}.
\]
Similarly, $\E_{x,y}\left[ \sum_{(x,\sigma), (y,\tau) \in X} \mu_{x,y}\left( \Fxys \right)^2 H_{y,\tau}(x, \sigma) \right]\le  \frac{200\log(5/\eps)}{q^2}$,
and the proof is concluded.
\end{proof}

Let $X'(x,y) = \{(\sigma, \tau) \; | \; (x,\sigma), (y,\tau) \in X, \; H_{x,\sigma}(y,\tau) + H_{y,\tau}(x,\sigma) = 0\}$.

As an immediate consequence of Lemma~\ref{lm: uneven F},
\begin{equation*}
    \E_{x,y}\left[\sum_{(\sigma, \tau) \in X'(x,y)} \mu_{x,y}\left(\Fxys\right)^2 \right] \geq \frac{\epsilon}{4} - \frac{1}{q} -  \frac{400\log(5/\eps)}{q^2} \geq \frac{\epsilon}{5}.
\end{equation*}
Also, 
\[
\E_{x,y}\left[\sum_{(\sigma, \tau) \in X'(x,y)} \mu_{x,y}\left(\Fxys\right)^2 \ind_{\mu_{x,y}\left(\Fxys\right)\leq \frac{\eps}{10}}\right]
\leq
\frac{\eps}{10}\E_{x,y}\left[\sum_{(\sigma, \tau) \in X'(x,y)} \mu_{x,y}\left(\Fxys\right)\right]
\leq \frac{\eps}{10},
\]
and so
\begin{equation*}
    \E_{x,y}\left[\sum_{(\sigma, \tau) \in X'(x,y)} \mu_{x,y}\left(\Fxys\right)^2 \ind_{\mu_{x,y}\left(\Fxys\right)> \frac{\eps}{10}} \right] \geq \frac{\epsilon}{10}.
\end{equation*}

Therefore, there is some $x$ such that
\begin{equation*}
    \E_{y}\left[\sum_{(\sigma, \tau) \in X'(x,y)} \mu_{x,y}\left(\Fxys\right)^2  \ind_{\mu_{x,y}\left(\Fxys\right)> \frac{\eps}{10}}\right]  \geq \frac{\epsilon}{10},
\end{equation*}
and we fix such $x$ henceforth.
For this point $x$, there are at most $5/\epsilon$ values $\sigma$ such that $(\sigma, \tau) \in X'(x,y)$ for some pair $(y, \tau)$. Consequently for this $x$, there is a value $\sigma$ such that
\begin{equation*}
    \E_{y} \left[\sum_{\tau : (\sigma, \tau) \in X'(x,y)}  \mu_{x,y}\left(\Fxys \right)^2 \ind_{\mu_{x,y}\left(\Fxys\right)> \frac{\eps}{10}}\right] \geq \frac{\epsilon^2}{50}.
\end{equation*}
Let $\eta = \mu_x(F_{x,\sigma})$. For each $y$, note that only $\tau = g_{x,\sigma}(y)$ such that $(\sigma, \tau) \in X'(x,y)$ can contribute to the above sum, and the corresponding summand
$\mu_{x,y}\left(\Fxys\right)$ is between $\eps/10$ and $2\eta$, and so there is some $t\in [\eps/10, 2\eta)$ such that
\begin{equation*}
    \E_{y} \left[\mu_{x,y}\left(\Fxys \right)^2 \ind_{\mu_{x,y}\left(\Fxys\right) \in [t,2t]} \ind_{(\sigma, \tau) \in X'(x,y)}\right] \geq \frac{\epsilon^2}{50\log(10/\eps)}.
\end{equation*}
Let $Y$ be the set consisting of $x$ and all $y$'s such that $\mu_{x,y}\left(\Fxys\right) \in [t,2t]$ and $(\sigma, \tau) \in X'(x,y)$ for $\tau = g_{x,\sigma}(y)$; 
writing $p_t = \Pr_{y}\left[y\in Y\right]$, we get that
$t^2 p_t\geq \frac{\epsilon^2}{200\log(10/\eps)}$.

For $y\in Y$, let $\tau = g_{x,\sigma}(y)$ and denote $\mathcal{F}_y = \mathcal{F}_{y, \tau}$ and $\mathcal{F} = \cup_{y\in Y} \mathcal{F}_y$. 

\begin{lemma} \label{lm: equal g}
For all $y\in Y$ we have that $g_{x,\sigma}\equiv g_{y,\tau}$.
\end{lemma}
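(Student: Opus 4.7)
The plan is to set $h := g_{x,\sigma} - g_{y,\tau}$, a polynomial of total degree at most $d$, and to force $h \equiv 0$ by Schwartz--Zippel. The case $y = x$ (which belongs to $Y$ by definition) is trivial since then $\tau = g_{x,\sigma}(x) = \sigma$, so I will assume $y \ne x$ and let $\ell$ denote the line through $x$ and $y$.

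Since $y \in Y$, by definition $\mu_{x,y}(\Fxys) \geq t \geq \eps/10 > 0$, so $\Fxys = \mathcal{F}_{x,\sigma} \cap \mathcal{F}_{y,\tau}$ is nonempty. For every cube $C \in \Fxys$ one has $T(C) = g_{x,\sigma}|_C = g_{y,\tau}|_C$ as polynomials on $C$, hence $h$ vanishes on all $q^3$ points of $C$. Therefore $h$ vanishes on the entire union $Z := \bigcup_{C \in \Fxys} C$, and it suffices to show that $\mu(Z) > d/q$, since then Schwartz--Zippel forces $h \equiv 0$.

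For that, I would run a simple double counting argument in the bi-regular inclusion graph between $\Ff_q^\num \setminus \ell$ and $\C_{x,y}$. Each cube $C \in \C_{x,y}$ contains exactly $q^3 - q$ points off $\ell$, and every point $z \notin \ell$ lies in exactly $K$ cubes of $\C_{x,y}$ (the value of $K$ is pinned down by the handshake identity $|\C_{x,y}|(q^3 - q) = (q^\num - q)K$). Counting incidences between $\Fxys$ and points off $\ell$ from the two sides gives
\[
|\Fxys|(q^3 - q) \;\leq\; |Z \setminus \ell| \cdot K,
\]
so $|Z \setminus \ell| \geq \mu_{x,y}(\Fxys)(q^\num - q)$ and hence $\mu(Z) \geq \mu_{x,y}(\Fxys)(1 - q^{1-\num}) \geq \eps/20$. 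Since $\eps \geq 10^7 d^6 / q$, this is comfortably larger than $d/q$, so Schwartz--Zippel yields $h \equiv 0$, i.e., $g_{x,\sigma} \equiv g_{y,\tau}$.

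The only mildly delicate ingredient is the final double count, which converts a lower bound on the density of $\Fxys$ among cubes through $\ell$ into a lower bound on the measure of $Z$ in $\Ff_q^\num$; beyond that, the lemma is a direct consequence of the definitions of $\mathcal{F}_{x,\sigma}$ and $\mathcal{F}_{y,\tau}$ together with Schwartz--Zippel. Note that this is precisely the step where the quantitative gap in the hypothesis $\eps \geq 10^7 d^6 / q$ is used to outstrip the $d/q$ threshold.
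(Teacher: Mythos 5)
Your proof is correct, but it takes a different route from the paper's. Both arguments have the same starting point: every cube $C\in\Fxys$ satisfies $T(C)=g_{x,\sigma}|_C=g_{y,\tau}|_C$, so the polynomial $h=g_{x,\sigma}-g_{y,\tau}$ (degree at most $d$) vanishes on the entirety of each such cube, and $\mu_{x,y}(\Fxys)\geq t\geq\eps/10$ is large. The divergence is in how this largeness is converted into $h\equiv 0$. The paper argues probabilistically and locally: it samples a random cube in $\C_{x,y}$ by choosing two auxiliary points $z_1,z_2$, applies Schwartz--Zippel to each coordinate to bound $\Pr_{C\in\C_{x,y}}[g_{x,\sigma}|_C=g_{y,\tau}|_C]\leq 4d^2/q^2$, and contrasts this with $\mu_{x,y}(\Fxys)\geq\eps/10$. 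You argue geometrically and globally: you run a double count in the bi-regular inclusion graph between $\Fm\setminus\ell$ and $\C_{x,y}$ to show that the union $Z=\bigcup_{C\in\Fxys}C$ has density at least roughly $\mu_{x,y}(\Fxys)\geq\eps/20$ in $\Fm$, and then apply Schwartz--Zippel once to the ambient space. Your bound (zeroes on a set of measure $\geq\eps/20$, needing $\eps/20>d/q$) is slightly weaker than the paper's ($4d^2/q^2<\eps/10$) since you only use that $h$ vanishes pointwise on $Z$ rather than exploiting that it vanishes on full three-dimensional cubes, but both clear the threshold by a wide margin given $\eps\geq 10^7 d^6/q$. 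Your version has the small virtue of avoiding the slightly informal factor-of-two bookkeeping in the paper's conditioning on $z_1,z_2$; the paper's version avoids the double-counting lemma. Both are fine.
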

\begin{proof}
Assume towards contradiction that $g_{x,\sigma}\not\equiv g_{y,\tau}$.
We choose a cube in $C\in \C_{x,y}$ uniformly at random and bound the probability that $g_{x,\sigma}|_{C} = g_{y,\tau}|_{C}$. Note that any cube in $C\in \C_{x,y}$ is of the form $x + \spa(y-x, z_1, z_2)$, so we may choose a cube in $\C_{x,y}$ uniformly at random by choosing $z_1, z_2$, such that $z_1 \notin x+ \spa(y-x)$, $z_2 \notin x + \spa(y-x,z_1)$. By the Schwartz-Zippel Lemma, the probability that $g_{x,\sigma}$ and $g_{y,\tau}$ are equal on both $z_1$ and $z_2$ is at most $(\frac{2d}{q})^2 = \frac{4d^2}{q^2}$. The factors of $2$ account for the fact that $z_1$ and $z_2$ are not uniform over all possible points. Thus,
\[
\frac{\eps}{10}\leq\mu_{x,y}\left(\Fxys\right)\leq \Pr_{C\in \C_{x,y}}\left[g_{x,\sigma}|_{C} = g_{y,\tau}|_{C}\right]\leq \frac{4d^2}{q^2},
\]
and contradiction.
\end{proof}
Thus, denoting $g = g_{x,\sigma}$ we get that $g|_{C} = T(C)$ for all $C\subseteq \mathcal{F}$. We finish the proof by 
showing that $\mu(\mathcal{F})\geq \Omega(\eps)$. Towards this end, consider the bipartite inclusion graph $G_5 = G(\Ff_q^m, \C)$, which has second singular value at most $q^{-3/2}$ by Lemma~\ref{lm: singular value}. By Lemma~\ref{lm: edge sampling},
\begin{equation*}
    \left|\Pr_{y \in Y, C \ni y}\left[C \in \mathcal{F}\right] - \Pr_{C \in \C, y \in C}\left[C \in \mathcal{F}\right]\right| \leq \frac{q^{-3/2}}{\sqrt{\mu(Y)}} = \frac{1}{\sqrt{q^{3}p_t}}
    \leq t\frac{\sqrt{200\log(10/\eps)}}{\sqrt{q^3\eps^2}},
\end{equation*}
where we used $p_t\geq \frac{\epsilon^2}{200t^2\log(10/\eps)}$. Since for each $y \in Y$ we have $(\sigma, \tau) \in X'(x,y)$ for $\tau = g_{x,\sigma}(y)$, we have
\[
\mu_{y}(\mathcal{F}_y) \geq \frac{\mu_{x,y}\left(\Fxys\right)}{2} \geq \frac{t}{2}. 
\] 
Therefore,  $\Pr_{y \in Y, C \ni y}\left[C \in \mathcal{F}\right] \geq \frac{t}{2}$, so we get

\begin{equation*}
    \mu(\mathcal{F}) = \Pr_{C \in \C}\left[C \in \mathcal{F}\right] \geq  \frac{t}{2} - t\frac{\sqrt{200\log(10/\eps)}}{\sqrt{q^3\eps^2}} \geq \frac{t}{3},
\end{equation*}
where we used the fact that $\eps\geq \frac{10^7d^6}{q}$. Since $t\geq \eps/10$, the proof is concluded.

\bibliographystyle{plain}
\bibliography{references}
\end{document}